\def\be{\begin{equation}}     
\def\ee{\end{equation}}
\def\G{\mathcal{G}} 
\def\ket#1 {| #1 \rangle}
\def\Z{\mathbb{Z} }
\def\E{\mathbb E}
\numberwithin{equation}{section}
\newtheorem{theorem}{Theorem}[section]
\newtheorem{lemma}[theorem]{Lemma}
\theoremstyle{definition}
\let\Im\undefined
\let\Re\undefined
\DeclareMathOperator{\Im}{Im \,}
\DeclareMathOperator{\Re}{Re \,}
\DeclareMathOperator*{\esssup}{ess\, sup \,}
\begin{document}
\title{Absolutely continuous spectrum implies ballistic transport for quantum particles in a random potential on tree graphs}
\author{Michael Aizenman}
\affiliation{Depts of Physics and Mathematics, Princeton University, Princeton, NJ 08544, USA}
\author{Simone Warzel}
\affiliation{Zentrum Mathematik, TU Munich, Boltzmannstr.~3, 85747 Garching, Germany}

\date{\today}

\begin{abstract} 
\hfill \emph{Dedicated to Elliott H. Lieb in celebration of his $80$th birthday} \\[2ex] 
We discuss the dynamical implications of the recent proof that for a quantum particle in a random potential on a 
regular tree graph absolutely continuous ($ac$) spectrum occurs non-perturbatively through rare fluctuation-enabled  resonances.    The main result is spelled in the title.  
\end{abstract}

\pacs{81Q10, 82C44}
\maketitle

\section{Introduction}

Progress was recently made in the understanding of the  spectra of Schr\"odinger operators with random potential on tree graphs.  In particular, it was found that absolutely continuous ($ac$) spectrum is quite robust there, and already at weak disorder ac spectrum appears  in regimes  of extremely low density of states.\cite{AWPRL,AWEuro}  The proof  suggests that in such regimes the spread of the wave function occurs by tunneling which is enabled by rare resonances.  It is  natural to ask how do  wave packets with  energies limited to such a regime spread, and at what rate does the distribution of the particle's distance from its starting point grow.    Our purpose here is to answer the latter question.  

\subsection{Bounds on quantum dynamics}

The quantum dynamics of a particle moving on a graph, which is composed of a vertex set $\mathcal{G} $ and an edge set $\mathcal{E} $, is generated by a Schr\"odinger operator of the form 
\begin{equation}\label{def:H}
	 (H\psi)(x) =  - \sum_{(x,y)\in \mathcal{E}} \psi(y) +V(x)\psi(x) \, .
\end{equation}
If the vertex degree of the graph is uniformly bounded (in which case the first term in \eqref{def:H} defines a bounded operator) then for any real valued potential $ V$ the operator $ H $ is self-adjoint operator on an appropriate domain in the Hilbert space $ \ell^2(\mathcal{G}) $.  
Under the unitary time-evolution generated by $ H $ the  probability of  
the particle having the position   $x\in \G$ at  time $ t >0 $ after it was started at a quantum state $\psi$  is given by:
\begin{equation}
	P_{\psi,t}(x) :=  \left|\left(e^{-itH}\psi\right)(x)\right|^2  \, .
\end{equation}
While the probabilistic interpretation of $P_{\psi,t}(x)$  is limited to vectors  with norm $\| \psi  \| =1$, it is convenient for us to extend the above symbol  to all $\psi$ regardless of their normalization. 

 Instead of  investigating $ P_{\psi,t}(x)  $ directly, it is often easier to study the time-averaged transition probability
\begin{equation}\label{def:pta}
	\widehat P_{\psi,\eta}(x) \ := \ 2 \eta \int_0^\infty e^{-2\eta t }\,  P_{\psi,t}(x)  \, dt \ =   \ \frac{\eta}{\pi} \int  \left|\left((H-E-i\eta)^{-1} \psi\right)(x)\right|^2 \, dE\, ,
\end{equation}
with inverse time-parameter $ \eta >0  $. 
The equality, which  is based on the spectral theorem and Plancherel's identity, links the long-time averages ($ \eta \downarrow 0$)  of the probability distribution $ \widehat P_{\psi,\eta}(\cdot) $  
with properties of the operator's Green function 
$	 G(x,y;\zeta) $
on which our analysis will focus.

The rate of growth of the distance travelled is conveniently described by the moments of the corresponding distributions, which are defined as
\begin{equation}\label{eq:mom}
	M_\psi(\beta, t) \ := \ \sum_{x\in \mathcal{G}} |x|^\beta \,  P_{\psi,t}(x) \  \quad \mbox{and}  \qquad 
	\widehat M_\psi(\beta, \eta) \ := \ \sum_{x\in \mathcal{G}} |x|^\beta \,  \widehat P_{\psi,\eta}(x)
 \, . 
\end{equation}
Here and in the following $ |x| := d(x,0) $ denotes the graph-distance of the vertex $ x $ to some fixed, but arbitrary vertex $ 0 \in \mathcal{G} $. 

To place our results in their natural context,  let us recall some points of reference  
on the relation of quantum dynamics with the spectral properties of their generator. 

\begin{enumerate}
\item  In the presence of disorder, in particular for random potentials,  there may exist subspaces of $ \ell^2(\mathcal{G}) $ over which one finds  different behavior.   For functions $\psi $ which are spanned by localized states, the moments $M_\psi(\beta, t)$ remain bounded  uniformly  in time.\cite{A92}   For functions $ \psi $ in the subspace generated by extended (generalized) eigenfunctions, the moments increase at least in the average sense that $\widehat M_\psi(\beta, \eta) \to \infty$ for $\eta \to 0$ and any $ \beta > 0 $ (by the RAGE theorem; cf.~Ref.~\onlinecite{CFKS}).

\item In the physical picture of delocalization in the presence of disorder, which was advanced by D. Thouless and collaborators, 
it is generally expected (though the statement still remains unproven)  that in the corresponding spectral regime the probability distribution  $P_\psi(x,t) $  spreads at a rate  corresponding to diffusion~(cf. Ref.~\onlinecite{YJ}).   In the finite dimensional case, of $\G = \Z^d$, that translates to: 
\be  \label{eq:mom_r}
M_\psi(\beta, t) \  \sim \ t^{r \beta} \,, 
\qquad \mbox{and}  \qquad 
	\widehat M_\psi(\beta, \eta)  \sim \ \eta^{-r \beta}  \, ,  \qquad \mbox{with $r=1/2$.}
\ee 
This is in contrast to the ballistic motion for which  $r=1$, as  is the case in the absence of disorder,  with $V$ either constant or periodic in case of $\Z^d$, or radially periodic on regular tree graphs.  
It is however relevant here to note that in the hyperbolic geometry of a regular tree the classical diffusion also spreads ballistically, since at each instance there are more directions at which  $|x|$ would increase than the one direction at which it goes down.     

 \item 
A general upper bound can be obtained from the observation  that 
on any graph $ (\mathcal{G},\mathcal{E}) $ with a uniformly bounded vertex degree, the distance travelled does not increase faster than at some finite speed $\hat v < \infty$, in the  sense that the  probability for faster growth decays exponentially: 
  \begin{equation}\label{eq:ballistic1}
{\rm Pr}_{\delta_u,t}(d(x,u) > v t) \ := \  \sum_{x: \, d(x,u) > v t} P_{\delta_u,t}(x) \   \le \  e^{-\mu t (v-\hat v)} 
\end{equation}  
 at some $\mu >0$, where the initial state  is taken to be the normalized function   localized at (an arbitrary) vertex $ u \in \mathcal{G} $, and $d(x,u)$ is the distance between the two sites $x,u\in \G$.  
This bound holds regardless of the potential $V$, and in particular it implies the ballistic upper bounds: 
\begin{equation}\label{eq:ballistic2}
      \widehat M_\psi(\beta,\eta)  \ \leq \   C(\beta,\psi) \ \eta^{-\beta} 
  \end{equation}
 for all normalized $ \psi \in \ell^2(\mathcal{G}) $ with $ 
\sum_{x\in \mathcal{G}} |x|^\beta |\psi(x)|^2 < \infty $. 
For completeness, a proof of \eqref{eq:ballistic1} is included in Appendix~\ref{App:B}, where we also comment on its relation with the Lieb-Robinson bounds.\cite{LR}     
 
\item  Lower bounds on the moments can be obtained by estimating the probability of lingering: 
\be 
{\rm Pr}_{\psi,t}(|x| < b \, t^r) \ := \  \sum_{|x| < b t^r} P_{\psi}(x,t)  \, \quad \mbox{and} \qquad 
\widehat{\rm Pr}_{\psi,\eta} (|x| < b \, \eta^{-r}) \ := \  \sum_{|x| < b \, \eta^{-r}} \widehat P_{\psi,\eta} (x,t)  \, .
\ee  
The afore-mentioned RAGE theorem implies that for any state $\psi$ within which  $H$ has only continuous spectrum:
\be  \lim_{\eta \to 0}    \widehat{\rm Pr}_{\psi,\eta} (|x| < b)   \  = \ 0 
\ee 
 for any finite $b$ (and $r=0$).  Stronger general bounds,     
due originally to I. Guarneri (with generalization found in Refs.~\onlinecite{Com93, Last96, KiLa00}), are based on the finer distinction among spectral types, classified by  the Hausdorff dimension of the spectrum and more precisely by the degree of H\"older continuity of the spectral measure associated with $\psi$.  That measure is said to be uniformly 
$ \alpha $-H\"older continuous,  for $ \alpha \in (0,1] $, if  for all  Borel sets $ I \subset \mathbb{R} $ of Lebesgue measure $ |I| \leq 1 $ one has  
	$\mu_\psi(I) \leq  C_{\psi} \, |I|^\alpha $, at a common value of $C_{\psi}$.    
For $ \mathcal{G} = \mathbb{Z}^d $,   Guarneri~\cite{G93} proved that in such cases  
$ \widehat P_{\psi,\eta}(x) \le C \, \eta^\alpha$ and thus
\be \widehat{\rm Pr}_{\psi,\eta} (|x| < b\, \eta^{-\alpha/d}) \ \le \  C_d \, b^d  \, . 
\ee 
This directly implies that  for any $\beta > 0 $ 
 \eqref{eq:mom_r} can hold only with $r\ge \alpha/d$, since selecting $b$ so that  $C_d \, b^d =1/2$ one gets:  
	\begin{equation}\label{eq:Guarneri}
		 \widehat M_\psi(\beta, \eta)   \  \geq \  \  \frac{ b^\beta }{2}  \ \eta^{-\beta \alpha/d}  \, . 
	\end{equation} 

There are  operators with absolutely continuous spectrum, corresponding to $ \alpha = 1 $, for which the Guarneri bound is almost saturated.\cite{BS00}
However, this lower bound diminishes with dimension and it provides no information for operators on tree graphs (which correspond to $d=\infty$).

\end{enumerate}

In this note we focus on  the  case  $ (\mathcal{G}, \mathcal{E} ) $ is a regular rooted tree graph and the operator~\eqref{def:H}  is random and known to have a regime of ac spectrum. 
Our main result is that the moments grow ballistically, that is  $\widehat M_\psi(\beta, \eta)$ obey not only an upper bound but also a lower bound with $r=1$. 

\subsection{Statement of the main result}

 The main topic of this note are operators of the from~\eqref{def:H} on the Hilbert space over a regular rooted tree graph, whose vertex set we denote $\mathcal{T}$,  in which every vertex aside from the root  $ 0 $ has $  K+ 1  $ neighbors with $ K \geq 2 $.  We take the potential $ V: \mathcal{T} \to \mathbb{R} $  to be random, with a distribution described by:  
  \begin{enumerate}
 	\item[A1.] $ V(x) $, $ x \in \mathcal{T} $, are independent, identically distributed random variables,
	\item[A2.]  the probability distribution of the potential at a site is absolutely continuous, $ \mathbb{P}( V(x) \in dv) = \varrho(v) \, dv $, with a bounded probability density $ \varrho \in L^\infty(\mathbb{R}) $ satisfying:
	\begin{enumerate}
	\item the moment condition with some $ r > 12 $: \; 
		$ \mathbb{E}\left[ |V(0)|^r \right] := \int_\mathbb{R} v^r \, \varrho(v) \, dv < \infty $;
	\item  the local upper bounds:  
	 \begin{equation}\label{eq:max}
	 	\varrho(v) \leq c \, \inf_{\nu \leq 1 } \frac{1}{2\nu} \int_{|v'-v| < \nu } \varrho(v') \, dv'  
	\end{equation}
	for Lebesgue-almost all $v \in\mathbb{R}$ at some $ c < \infty $.
	\end{enumerate}
\end{enumerate}

Let us recall some known facts for random Schr\"odinger operators $H$ of the form \eqref{def:H}.\cite{CL,PF,K00} By ergodicity arguments the spectrum $ \Sigma( H) $ is  almost surely given by a non-random closed set.     
Properly formulated, that also holds for the spectra corresponding to the different spectral components in the  Lebesgue decomposition of the spectral measure, i.e., the absolutely continuous ($ac$), singular continuous ($sc$), and pure point ($pp$) spectrum for which we shall use the capital letter $\Sigma_{\#}(H)$, 
with $ \# $ standing for $ac$, $sc$, or $pp$.

Instead of the closed set $ \Sigma_{ac} (H)$, we will rather focus on a measure-theoretic support of the ac density of the spectral measure.
To describe the latter, one may start from two generally valid facts: \emph{i.}~for Lebesgue almost every $E\in \mathbb R $ the limit $G(0,0;E+i0) = \lim_{\eta \downarrow 0} G(0,0;E+i\eta)$ exists almost surely, and \emph{ii.}~the $ac$ component of the spectral measure associated with the vector $\delta_0$ is  $\Im G(0,0;E+i0) \, dE / \pi$.   
We then define:
\be  \label{def:ac}
\sigma_{ac} (H) \ := \  \left\{  \, E\in \mathbb R \, : \,  \mathbb{P} \left( \Im G(0,0;E+i0) \neq 0 \right )   >  0  \ \, \right\}\,.
\ee 
For reasons which are explained  in Appendix~\ref{App:ac}, this notion of the $ac$ spectrum is better suited for out purpose than  $\Sigma_{ac}(H) $.  Both sets are non-random, and possibly differ in only in a set of zero Lebesgue measure.  However, that was not established yet.  It may be added that  at least on tree graphs, for each energy $E$ the  probability in \eqref{def:ac} is either zero or one.  It follows that  the non-random set $\sigma_{ac}(H)$ is also the support of the $ac$ component of the spectrum  for almost every realization of the randomness.     \\

While it is rather straightforward to prove that  $ \Sigma( H)  = [-2\sqrt{K},2\sqrt{K}] + {\rm supp}\, \varrho $, determining the spectral components is usually harder.  In the tree situation, the different spectral regimes can be characterized (almost completely) by a  function $ \varphi(E;1)$,  in terms of which: 
\begin{enumerate}
	\item if  $\varphi(E;1)  < \log K$ for  all energies $E$ in  some interval $I\subset \Sigma(H)$ then  $I \subset \Sigma_{pp}(H) $. 
	\item if $ \varphi(E;1)  > \log K $ for almost all energies $E$ in a measurable subset $S \subset \Sigma(H)$ then   $S\subset \sigma_{ac} (H)$  up to a difference of zero Lebesgue measure, and  furthermore  $H$ has only $ac$ spectrum in $S$.  
\end{enumerate}

The function is constructed as the boundary value  $ \varphi(E;1) := \lim_{s\uparrow 1} \varphi(E;s) $   of a large deviation free energy function defined in~\eqref{def:fe} below.\cite{AW11} 
We could not calculate $\varphi(E;1) $  explicitly, or prove regularity and continuity, except for some partial  statements.  However,  its analysis and that of the Lyapunov exponent, which bounds $ \varphi(E;1) $ from below, yields the following picture:

\begin{enumerate}
\item
In case of unbounded potentials $pp$ spectrum emerges at extreme energies ($|E|> K+1$) for small disorder\cite{A92} and it covers the whole spectrum for large disorder, as was previously indicated in the work or Abou-Chacra {\it et al}.\cite{AAT} 

\item The above criterion was recently used to show that ac spectrum emerges for  arbitrarily small disorder well beyond the spectrum of the adjacency operator, in the regime $ [-(K+1),(K+1)] \cap \Sigma(H)$ (Refs. \onlinecite{AW11,AWPRL}).   Earlier, the persistence of $ac$ spectrum  within the spectrum of the adjacency operator, i.e. $ [-2\sqrt{K},2\sqrt{K}]  \cap \Sigma_{ac}(H) \neq \emptyset $, was established by A.~Klein\cite{K}, through a continuity argument (which was recently extended to the Bethe strip~\cite{KSa}).

\item
Somewhat surprisingly, for  bounded random potentials the criterion allowed also to 
 establish that at weak disorder at its edges the spectrum is purely ac.   In particular,  in that case there is no localization at band edges.\cite{AWEuro}
\end{enumerate}

The non-perturbative  emergence of  ac spectrum well beyond the spectrum of the adjacency operator has been explained in terms of a resonance mechanism for which the 
exponential growth of the volume in a tree graph plays an important role (cf.~Refs.~\onlinecite{AW11}  and \onlinecite{AWPRL}  for a short summary). Along with that goes a picture of extended states which are localized at an infinite collection of `resonating vertices'. 

In this context, it natural to ask about the dynamical behavior of the states within the ac spectrum. It was already shown by A. Klein~\cite{K2} that in the  regime where the persistence of ac spectrum at weak disorder could be established  by a continuity argument,  the averaged dynamics is ballistic.   We now prove that  ballistic behavior extends to the full regime of ac spectrum, including the region where the dynamics appear to be dominated by tunneling events.  Following is the key bound.

 \begin{theorem}\label{thm:main0} 
Let $H$ be an operator of the form~\eqref{def:H} on a regular rooted tree graph with a random potential satisfying the above conditions~{\rm A1--2}.  Then for any initial state of the form $ \psi = f(H) \delta_0 $, with a measurable  function $ f \in L^2(\mathbb{R}) $ supported in $  \sigma_{ac}(H)$,  and all $b>0$:    
\begin{equation}\label{eq:main0}
 \mathbb{E}\left[\widehat{\rm Pr}_{\psi,\eta} (|x| < b\, \eta^{-1})\right] \ \le \  C(f)  \, b \  +  \ o(\eta)  \, . 
\ee
with some  $C(f) <\infty$, and  $o(\eta)$  a quantity which vanishes for $\eta \to 0$.  
\end{theorem}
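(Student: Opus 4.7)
The plan is to bound the lingering probability via a Plancherel-type identity applied to the resolvent form of $\widehat P$, reducing the task to a linear-in-$R$ bound on a sum of squared spectral densities on the tree, and then to extract this linear growth from the standard product decomposition of the tree Green's function along the path from $0$ to $x$. The main obstacle will be a second-moment inequality on the decoupled Green's functions, which goes beyond the fractional-moment criterion that characterizes $\sigma_{ac}(H)$.

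First, starting from the resolvent expression in~\eqref{def:pta} and the spectral representation
\begin{equation*}
(R_\eta(E)\psi)(x) \ = \ \int \frac{f(\lambda)\,d\mu_{x,0}(\lambda)}{\lambda-E-i\eta}
\end{equation*}
with $\mu_{x,0}(d\lambda) := \langle\delta_x, dE_\lambda \delta_0\rangle$, I would observe that the support of $f$ inside $\sigma_{ac}(H)$ together with the Cauchy--Schwarz bound $|d\mu_{x,0}|^2 \le d\mu_x\, d\mu_0$ forces $f\,d\mu_{x,0}$ to be absolutely continuous, with density $f(\lambda)\rho_{x,0}(\lambda)$ where $\rho_{x,0}(\lambda) := \Im G(x,0;\lambda+i0)/\pi$. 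The $L^2(dE)$-boundedness of the Hilbert transform and the Poisson-kernel convolution (the real and imaginary parts of $1/(\lambda-E-i\eta)$) then yields the pointwise estimate $\widehat P_{\psi,\eta}(x) \le 2\pi\eta\,\|f\rho_{x,0}\|^2_{L^2(\mathbb{R})}$, so that summing over $|x|<R$ and taking expectations reduces the main assertion to the uniform bound
\begin{equation*}
\int |f(\lambda)|^2 \sum_{|x|<R} \mathbb{E}\bigl[|\rho_{x,0}(\lambda)|^2\bigr]\,d\lambda \ \le \ \widetilde{C}(f)\,R .
\end{equation*}

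For the tree-specific step, I would use the product decomposition of the Green's function along the unique path $0 = x_0, x_1, \ldots, x_n = x$,
\begin{equation*}
G(0,x_n;\zeta) \ = \ \prod_{k=0}^{n} G^{\{k\}}(x_k,x_k;\zeta) ,
\end{equation*}
where $G^{\{k\}}$ is the Green's function at $x_k$ of the subtree obtained by decoupling the path edges at $x_k$. On the regular tree with i.i.d.\ potential these subtrees are disjoint, so under assumption~A1 the factors are independent and for interior $k$ identically distributed, with common second moment $A(\lambda) := \mathbb{E}|G^{\{k\}}(x_k,x_k;\lambda+i0)|^2$. Since $|\rho_{x,0}|^2 \le |G(x,0;\lambda+i0)|^2/\pi^2$ and there are $K^n$ vertices at level $n$, the contribution of level $n$ is at most $\text{const}\cdot K^n A(\lambda)^{n-1}$, and the desired linear-in-$R$ bound would follow from the second-moment inequality $K A(\lambda) \le 1$ for Lebesgue-almost every $\lambda \in \mathrm{supp}\,f$.

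The hard part is precisely this $s=2$ moment inequality throughout the non-perturbative ac regime. The characterization $\varphi(E;1) > \log K$ from Refs.\ \onlinecite{AW11,AWPRL} is a fractional-moment condition at the critical exponent $s\uparrow 1$ and does not automatically propagate to $s = 2$. I would attempt to close the gap by (i) extending the large-deviation analysis of $\varphi(E;s)$ to a neighborhood of $s=2$, using the regularity~\eqref{eq:max} and the moment condition $r > 12$ from assumption~A2 to guarantee the convexity and integrability of $\varphi(\cdot;s)$ in $s$, or alternatively by (ii) replacing the pointwise-in-$\lambda$ bound with an integrated version against $|f(\lambda)|^2\,d\lambda$, where Fatou applied to $\Im G(x,0;\lambda+i\eta)$ as $\eta\downarrow 0$ would absorb the remaining boundary-value singularities into the $o(\eta)$ correction in the conclusion.
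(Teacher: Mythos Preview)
Your overall architecture matches the paper's: reduce the lingering probability to a uniform bound $\sum_{|x|<R}\mathbb{E}[|G(0,x;\cdot)|^2]\le CR$ over the relevant energies, and extract this from the tree structure. Your first reduction via the $L^2$-boundedness of the Cauchy transform is a legitimate alternative to the paper's Lemma~\ref{lem:enout} (which uses the Wegner bound and an $\ell^1$-approximation instead). But two later steps contain genuine gaps.

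\textbf{The factorization.} There is no product decomposition of $G(0,x_n;\zeta)$ into \emph{independent} factors. If you remove both path edges at each $x_k$, the resulting branching subtrees are disjoint and their diagonal Green's functions are independent, but their product is not $G(0,x_n;\zeta)$. The actual product formulas on trees run through \emph{forward} (or backward) subtrees, which are nested, so consecutive factors share all of their randomness; hence $\mathbb{E}[|G(0,x_n)|^2]$ does not factor as $A(\lambda)^{n}$. The paper never invokes independence: it uses the single-vertex factorization~\eqref{eq:factG}, integrates out $V(u)$ conditionally via~\eqref{eq:condE}, and deduces sub/super-multiplicativity of $n\mapsto\mathbb{E}[|G(0,x_n;\zeta)|^2]$ under a tilted expectation (Theorem~\ref{lem:fv}).

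\textbf{The second-moment bound.} Even with the right object in hand, the exponential rate $\varphi(2;\zeta)\le -2\log K$ is already provided by the a priori inequality~\eqref{eq:apriori}; what is missing is the \emph{finite-volume} upper bound $\mathbb{E}[|G(0,x;\zeta)|^2]\le C_+\,K^{-|x|}$ with $C_+$ uniform over $\zeta\in I+i(0,1]$, since~\eqref{eq:finitevolume} supplies this only for $s<1$. The mechanism that extends it to $s=2$ is neither convexity of $s\mapsto\varphi(s;\zeta)$ nor a Fatou argument (Fatou gives the wrong inequality for an upper bound), but rather finiteness of the \emph{negative} moment $\esssup_{\zeta}\mathbb{E}[(\Im G(0,0;\zeta))^{-3-\delta}]$ throughout $\sigma_{ac}(H)$. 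This is the paper's main technical input (Theorem~\ref{thm:Fprop}), established by a bootstrap on the distribution function $F(x)=\esssup_\zeta\mathbb{P}(\Im G(0,0;\zeta)\le x)$ through the recursion~\eqref{eq:recrel}; the condition $r>12$ in A2 enters precisely to close that bootstrap. Your options (i) and (ii) both miss this ingredient.
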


At the risk of partial repetition, we close this section with several remarks:    
\begin{enumerate}
		\item The complementary bounds~\eqref{eq:main0} and \eqref{eq:ballistic1}  show that throughout the $ac$ spectrum the (doubly averaged) quantum time evolution in a random potential on a tree is ballistic.
	
	\item The above notwithstanding,  Theorem~\ref{thm:main0} is also consistent with the afore mentioned conjecture of diffusive evolution, since on regular tree graphs the classical diffusion spreads ballistically.   
	
\item  In extending the previous proof of ballistic behavior from the perturbative regime (of small randomness and energies within the spectrum of the adjacency operator)~\cite{K2}  to the full region of $ac$ states, Theorem~\ref{thm:main0}  excludes the possibility of another dynamical behavior in  the regime where $ac$  spectrum is caused by rare resonances.   This includes energy regimes where the density of states is extremely low,  with Lifshitz tail asymptotics (as discussed in Ref.~\onlinecite{AWPRL}).  The diffusion constant in this regime, for which the proof  yields a lower bound involving $ \mathbb{E} \left( \| f(H) \delta_0 \|^2 \right) $), should be correspondingly small.

\item One may ask whether there are operators similar to \eqref{def:H}  on tree graphs,  for which wave packets of states within the continuous spectrum spread at a slower than ballistic  rate.   Sub-ballistic rates are known to occur in classical random walks  on trees in certain `random conductance models'.  In these random walk models the Laplacian is replaced by an operator with random (though still nearest-neighbor) hopping amplitudes, whose distribution extends down to zero (cf.~Ref.~\onlinecite{Bisk}).  
	\end{enumerate}

\section{Proof of ballistic transport on tree graphs}

\subsection{A  semiclassical relation of diffusive  bound with ballistic behavior} 
Aside from some functional analytic manipulations, the  main new ingredient used here in  the proof of Theorem~\ref{thm:main0} is the statement that throughout any bounded measurable $ I \subset \sigma_{ac}(H) $ 
 the Green function's second moments obeys, for all $ x \in \mathcal{T}$: 
\begin{equation}\label{eq:mainlem}
 	\esssup_{\zeta \in I +i(0,1]} \mathbb{E}\left[|G(0,x;\zeta)|^2\right] \ \leq \frac{C_+(I)}{ K^{|x|}}  \, , 
\end{equation}
with $C_+(I) < \infty$.  (Here and in the following $ \esssup $ stands for the Lebesgue-essential supremum.)

It is instructive to note that up to a multiplicative constant, the expression of the right  in 
\eqref{eq:mainlem}  
coincides with the mean value of the total time spent at the vertex $ x $ for a particle which undergoes diffusion  originating at the root ($x=0$).  The classical expression for that is:
\begin{equation}
	\int_0^\infty \langle \delta_x , \, e^{t D} \delta_0\rangle  \, dt = \langle \delta_x \, , (-D)^{-1} \delta_0 \rangle \ = \ \frac{C}{K^{|x|} } \, ,
\end{equation}
where $ (D \psi)(x) := \sum_{y\in \mathcal{N}_x} \psi(y) - d(x) \psi(x) $ is the diffusion generator  (with $\mathcal{N}_x$ the collection of sites neighboring $x$ and $d(x) = | \mathcal{N}_x|$ the site's degree, which is $K$ at the root and $(K+1)$ elsewhere). 
A comparison of~\eqref{eq:mainlem} with~\eqref{def:pta} reveals that such a bound  may indeed be expected for the Green function's disorder-averaged second moment  if over the range of energies $ I $ the dynamics  is  \emph{at least diffusive}.   However one should bear in mind that on trees the distance of a diffusing particle from its initial point grows ballistically.    Our bounds on the quantum evolution  bear out this  double perspective.

In Lemma~\ref{lem:fv} we prove that  \eqref{eq:mainlem}  is implied by the statement that 
for any bounded subset $I \subset \sigma_{ac}(H)$:  
\begin{equation} \label{eq:delta}
	\esssup_{\zeta \in I +i(0,1]}  \mathbb{E}\left[(\Im G(0,0;\zeta))^{-3-\delta}\right]   \ < \ \infty . 
\end{equation}
at some $\delta >0$ (which does not depend of $I$).    The derivation of \eqref{eq:delta}  is the subject of Theorem~\ref{thm:Fprop}.  

Assuming the validity of \eqref{eq:mainlem} the proof of Theorem~\ref{thm:main0} is rather elementary, while  the proofs of~Lemma~\ref{lem:fv}  and Theorem~\ref{thm:Fprop} are increasingly more involved.  We shall therefore establish the above statements in this order: first show how  \eqref{eq:mainlem} implies the bound claimed in  Theorem~\ref{thm:main0}, then through Theorem~\ref{lem:fv}  show how  \eqref{eq:delta} implies \eqref{eq:mainlem}, and finally  (in part C) establish Theorem~\ref{thm:Fprop} and through it \eqref{eq:delta}.   

\subsection{Conditional proof of the main result}

We start by relating $ \mathbb{E}[\widehat P_{f(H)\varphi, \eta}] $ with an even more convenient quantity.   The difference between the two is the  vanishing term, $o(\eta)$, in~\eqref{eq:main0}. As a technical tool, we will employ the Wegner estimate which guarantees the absolute continuity of the average of the spectral measure $ \mu_\varphi $ of the random operator associated with any (a priori fixed) vector $ \varphi \in \ell^2(\mathcal{G})$:
\be  \label{eq:Wegner}
\E[\mu_\varphi(dE) ] \le C_W  \| \varphi \|^2 \, dE \, , 
\ee 
where $ C_W $ is a finite constant.\cite{W,Sim_Wolff}

\begin{lemma}\label{lem:enout}   
For a random operator satisfying the Wegner estimate~\eqref{eq:Wegner}, 
any $ f \in L^2(\mathbb{R}) $ and any $ \varphi \in \ell^2(\mathcal{G}) $ let 
\begin{equation}
		K_{\varphi,f,\eta}(x) \ := \ \frac{\eta}{\pi} \int |f(E)|^2 \left| \left((H-E-i\eta)^{-1} \varphi\right)(x)\right|^2 dE  \, . 
	\end{equation}	
Then the following $ \ell^1 $-convergence holds 
\begin{equation}
		\lim_{\eta\downarrow 0} \sum_{x\in\mathcal{G}} \left| \,  
		\E [ \widehat P_{f(H)\varphi}(x;\eta) ] - \E [K_{\varphi,f,\eta}(x) ] \right| \ = \ 0 \, .
	\end{equation} 
\end{lemma}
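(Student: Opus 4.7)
The plan is to decompose, via the functional calculus, the resolvent applied to $f(H)\varphi$ into a ``main'' piece that reproduces $K_{\varphi,f,\eta}$ plus a ``remainder'' whose $\ell^1(\mathcal{G})$-contribution will vanish as $\eta\downarrow 0$. The essential tool for converting disorder-averaged spectral measures into Lebesgue integrals will be the Wegner estimate~\eqref{eq:Wegner}, and the vanishing of the remainder will ultimately follow from $L^2$-continuity of translation. The hardest part is the bookkeeping of the interchanges of sum, integral, and expectation, since the summand is not a priori nonnegative; once this is resolved, the argument reduces to two elementary Cauchy--Schwarz estimates.

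First I would use $f(H) = f(E) + [f(H)-f(E)]$ together with the commutation of $f(H)$ with the resolvent to write
\begin{equation*}
 \bigl((H-E-i\eta)^{-1}f(H)\varphi\bigr)(x) = f(E)\, \bigl((H-E-i\eta)^{-1}\varphi\bigr)(x) + (T_E \varphi)(x),
\end{equation*}
with $T_E := [f(H)-f(E)](H-E-i\eta)^{-1}$. Using $|a+b|^2 - |a|^2 = 2\Re(\bar a b)+|b|^2$, integrating over $E$ with weight $\eta/\pi$, moving $|\cdot|$ inside $\mathbb{E}$ by the triangle inequality, summing over $x\in\mathcal{G}$, and applying Cauchy--Schwarz in $\ell^2(\mathcal{G})$ to the cross term, I expect to obtain
\begin{equation*}
 \sum_{x\in\mathcal{G}} \bigl|\mathbb E[\widehat P_{f(H)\varphi,\eta}(x)] - \mathbb E[K_{\varphi,f,\eta}(x)]\bigr| \le A_\eta + 2\sqrt{A_\eta\, B_\eta},
\end{equation*}
where
\begin{equation*}
 A_\eta := \frac{\eta}{\pi}\int \mathbb E\bigl[\|T_E\varphi\|^2\bigr]\, dE, \qquad B_\eta := \frac{\eta}{\pi}\int |f(E)|^2\, \mathbb E\bigl[\|(H-E-i\eta)^{-1}\varphi\|^2\bigr]\, dE,
\end{equation*}
the $\sqrt{A_\eta B_\eta}$ form arising from one further Cauchy--Schwarz in $E$ and in the probability space.

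Next, the spectral theorem turns $\|(H-E-i\eta)^{-1}\varphi\|^2$ and $\|T_E\varphi\|^2$ into integrals of $1$ and $|f(\lambda)-f(E)|^2$, respectively, against $d\mu_\varphi(\lambda)/[(\lambda-E)^2+\eta^2]$. Applying Wegner~\eqref{eq:Wegner} to replace $\mathbb E[d\mu_\varphi(\lambda)]$ by $C_W\|\varphi\|^2\, d\lambda$ and Fubini on the $(\lambda,E)$-integral, together with the Poisson kernel identity $\int (\eta/\pi)[(\lambda-E)^2+\eta^2]^{-1}\,dE = 1$, immediately yields the uniform bound $B_\eta \le C_W\|\varphi\|^2 \|f\|_2^2$. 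For $A_\eta$ the same manoeuvre followed by the change of variables $E = \lambda - \eta s$ gives
\begin{equation*}
 A_\eta \le C_W\|\varphi\|^2 \int_{-\infty}^{\infty}\frac{ds}{\pi(s^2+1)} \int \bigl|f(\lambda) - f(\lambda-\eta s)\bigr|^2 \, d\lambda,
\end{equation*}
in which the inner integral tends to zero for each fixed $s$ by continuity of translation on $L^2(\mathbb R)$ and is uniformly dominated by $4\|f\|_2^2$; dominated convergence then forces $A_\eta \to 0$, so that the two bounds together deliver the claimed $\ell^1$-convergence.
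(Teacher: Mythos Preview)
Your proposal is correct and follows essentially the same route as the paper: the same resolvent decomposition via $f(H)=f(E)+[f(H)-f(E)]$, the same Cauchy--Schwarz structure yielding a bound of the form $Q(\eta)+2\sqrt{Q(\eta)\cdot\text{const}}$ (your $A_\eta$ is the paper's $Q(\eta)$, and your $B_\eta$ is its uniform bound on $\sum_x K_{\varphi,f,\eta}(x)$), and the same use of the Wegner estimate to reduce $Q(\eta)$ to a double Lebesgue integral against the Poisson kernel. The only cosmetic difference is in the last step, where you invoke $L^2$-continuity of translation after the change of variables $E=\lambda-\eta s$, while the paper phrases the same fact as weak convergence of the Poisson-kernel convolution operator to the identity.
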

\begin{proof}
The Wegner's estimate~\eqref{eq:Wegner} guarantees the uniform boundedness of the $ \ell^1 $-norm of $K_{\varphi,f,\eta} $. Namely, abbreviating $ \delta_\eta(x) := \pi^{-1} \Im (x-i\eta)^{-1} $, which is an approximate $\delta$-function, the spectral representation yields:
\begin{equation}
	0 \leq \sum_{x \in \mathcal{G}} K_{\varphi,f,\eta}(x)  = \int |f(E)|^2 \int \delta_\eta(E'-E) \, \E[\mu_\varphi(dE') ]   \, dE \ \leq \ C_W  \| \varphi \|^2  \, \| f \|^2 \, . 
\end{equation}
Using~\eqref{def:pta}, the triangle inequality and the Cauchy-Schwarz inequality, it is not hard to see that
\begin{equation}\label{eq:L2}
	 \sum_{x\in\mathcal{G}} \left| \,  
		\E [ \widehat P_{f(H)\varphi,\eta}(x) ] - \E [K_{\varphi,f,\eta}(x) ] \right| \ \le  \ Q(\eta) + 2 \sqrt{C_W}  \| \varphi \|  \, \| f \| \,  \sqrt{Q(\eta)} \, ,  
\end{equation}
where
\begin{align}
Q(\eta) \ := &  \   \frac{\eta}{\pi} \sum_{x \in \mathcal{G}} \int \E\left[ 
\left| \left((H-E-i\eta)^{-1} f(H) \varphi\right)(x) - f(E) \, \left((H-E-i\eta)^{-1} \varphi \right)(x)  \right|^2 \, \right] \, dE \notag \\
 = & \ \E\left[ \int \int \delta_\eta(E'-E) \left| f(E') - f(E) \right|^2  \, dE \, \mu_\varphi(dE') \right]  \, . 
\end{align}
The equality is again based on the spectral representation. Applying now the Wegner bound~\eqref{eq:Wegner} again, we get: 
\begin{eqnarray} 
Q(\eta)  \ & \le & \ C_W\, \| \varphi \|^2\,   \int \int\delta_\eta(E'-E) \left| f(E') - f(E) \right|^2 dE \, dE'   \nonumber \\[2ex] 
&=& 
 \ 2\, C_W\, \| \varphi \|^2\,   \left[ \int   |f(E)|^2 dE -  \int_\mathbb{R}
 \int  \bar f(E')  \ \delta_\eta(E'-E) \   f(E)  \, dE \, dE'  \right]
 \,    . 
\end{eqnarray}  
In the limit  $\eta \downarrow 0$ the above quantity vanishes due to the weak convergence to identity of the operator in $L^2(\mathbb{R})$ whose kernel is $\delta_\eta(E'-E)$.  The latter statement is easily seen in the Fourier representation, where the operator corresponds to  multiplication by $e^{-|\tau|\eta}$, with $\tau$ denoting the Fourier transform variable. 

\end{proof}

Assuming now the bound \eqref{eq:mainlem}, which is proven below independently of the next  argument,  we proceed to the first of the three steps outlined above. 

\begin{proof}[Conditional proof of Theorem~\ref{thm:main0}]
By Lemma~\ref{lem:enout} (and using the notation introduced in its proof) 
\begin{equation}  \label{eq:twidle1}
  \Big|   \mathbb{E}\left[
  \widehat{\rm Pr}_{\psi,\eta} (|x| < R)
  \right]  -  \sum_{x:\, |x|<R} \mathbb{E}\left[K_{\delta_0,f,\eta}(x)  \right]  \Big| \ = \ o(\eta) \, . 
	\end{equation}
 In the special case $ \varphi = \delta_0 $ one has  
\begin{equation}\label{eq:secPtilde}
		K_{\delta_0,f,\eta}(x) \  = \  \frac{\eta}{\pi} \int |f(E)|^2 \left| G(x,0;E+i\eta) \right|^2 dE  \, . 
\end{equation}
The estimate \eqref{eq:mainlem} with $ I_f\ :=\ \{E\in \mathbb R :  f(E) \neq 0 \}  $ yields
\begin{align}
	\eta \sum_{|x| < R} \mathbb{E}\left[|G(x,0;E+i\eta)|^2\right] \ & 
	 \  \leq \  C_+( I_f ) \, \eta \, \sum_{n=0}^{ R -1} 1\  = \ C_+( I_f) \, \eta \, R \, . 
\end{align}	
We thus have 
\begin{align} \label{eq:twidle2}
\sum_{|x| < R }  \mathbb{E}\left[K_{\delta_0,f,\eta}(x)\right]    \leq \frac{C_+( I_f)   \, \|f\|_2^2}{\pi} \; \eta \, R \, , 
  \end{align}
The proof of Theorem~\ref{thm:main0} is concluded by combing the bounds 
\eqref{eq:twidle1}  and \eqref{eq:twidle2}, and 
choosing  $R=b \ \eta^{-1}$. 
\end{proof}

\subsection{The utility 
of the negative moments of the Green function}

Our next goal is to show that \eqref{eq:mainlem} follows from \eqref{eq:delta}.  In the proof we shall make use of some of the structure which was developed in Ref.~\onlinecite{AW11}.  
It was proven there that the following limit, which defines what is called  there the  free-energy function,  exists and is finite for any $ s \in [0,\infty) $ and  $ \zeta \in \mathbb{C}^+ := \{ z\in \mathbb{C}\, : \, \Im z >0\} $
\begin{equation}\label{def:fe}
	\varphi(s;\zeta) \ := \ \lim_{|x|\to \infty} \, \frac{1}{|x|} \log \mathbb{E}\left[ |G(0,x;\zeta)|^s \right]  \, . 
\end{equation}
Some  useful properties (taken from Section~3 in  Ref.~\onlinecite{AW11}) are:
\begin{enumerate}
	\item For any $ s \in [0,2] $ and any $ \zeta \in \mathbb{C}^+ $:
	\begin{equation}\label{eq:apriori}
		\varphi(s;\zeta)  \ \leq \ - s \, \log K \, .
		\end{equation}
	In fact, the inequality is strict for any $ 	\zeta \in \mathbb{C}^+ $.
	\item For any $ s \in [0,1) $ and $x\in \mathcal{T}$ the following `finite-volume bounds' hold 
 \begin{equation}\label{eq:finitevolume}
 	C_-(s;\zeta)   \, e^{\varphi(s;\zeta) \, |x| }  \ \leq \  \mathbb{E}\left[\left|G(0,x;\zeta)\right|^s\right] \ \leq \ C_+(s;\zeta)\,  e^{\varphi(s;\zeta) \, |x| } 
 \end{equation}
with $ C_\pm(s;\zeta)  \in (0,\infty) $, which  at fixed $ s \in [0,1) $ 
are bounded uniformly in  $  \zeta  \in [-E,E] +i  (0,1] $ for any   $  0 \le E < \infty $.  
\end{enumerate} 

In general one does not expect the bounds~\eqref{eq:finitevolume} to hold beyond $ s = 1 $,  since, for instance, for energies in the regime of pure point spectrum:  $\mathbb{E}\left[\left|G(0,x;E+i0)\right| \right] = \infty $ . However, as we assert next (based on the argument provided in Ref.~\onlinecite{AW11}) these bounds do extend to all energies at which the imaginary part of the resolvent has a finite inverse moment of power greater than one.   As will be shown in Theorem~\ref{thm:Fprop} below, this includes the entire $ac$ spectrum. 
\begin{theorem}\label{lem:fv}
Under the assumptions of Theorem~\ref{thm:main0}, if  for some bounded measurable set $ I \subset \mathbb{R} $ and some $ \delta > 0 $ 
\begin{equation}\label{eq:invmom}
	\esssup_{\zeta \in I +i  (0,1]} \mathbb{E}\left[ \left( \Im G(0,0;\zeta) \right)^{-3-\delta} \right] \ < \ \infty \, ,
\end{equation} 
then for almost all $ E \in I $ and all $ \eta \in (0,1] $: 
\begin{equation}\label{eq:fv}
 C_-\,  e^{\varphi(2;E+i\eta) |x|}  \ \leq \ \mathbb{E}\left[ |G(0,x;E+i\eta)|^2 \right] \ \leq  \ C_+\,  e^{\varphi(2;E+i\eta) |x|} 
 \end{equation} with some   $ C_\pm \in (0,\infty) $. 
\end{theorem}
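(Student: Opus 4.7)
The strategy is to leverage the tree-specific multiplicative decomposition of the Green function, combined with the one-site integration enabled by $\varrho \in L^\infty$, to upgrade the inverse-moment hypothesis~\eqref{eq:invmom} into a uniform control of $\E[|\Gamma(\zeta)|^2]$; from this, the finite-volume bounds~\eqref{eq:fv} follow by an adaptation of the machinery of Ref.~\onlinecite{AW11}. Concretely, iterated Schur complements along the path $0 = x_0, x_1, \ldots, x_n = x$ give
\[
G(0, x; \zeta) \ = \ G(0,0;\zeta) \, \prod_{k=1}^{n} \Gamma(x_k; \zeta),
\]
where $\Gamma(y;\zeta)$ is the diagonal of the Green function of $H$ restricted to the forward subtree $T_y$ rooted at $y$. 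The path forward Green functions obey the Markov recursion $\Gamma(x_k;\zeta)^{-1} = V(x_k) - \zeta - \Gamma(x_{k+1};\zeta) - \Gamma^{\perp}(x_k;\zeta)$, with $\Gamma^{\perp}(x_k;\zeta)$ (summing the off-path siblings' forward Green functions) independent across $k$ and of the path potentials $V(x_j)$.

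The central technical step is to show that~\eqref{eq:invmom} implies the uniform single-site bound $\esssup_{\zeta \in I+i(0,1]} \E[|\Gamma(\zeta)|^2] < \infty$. Writing $|\Gamma|^2 = [(V - \Re w)^2 + (\Im w)^2]^{-1}$ with $w = \zeta + \sum_{j=1}^{K} \Gamma_j$, conditioning on the children $\Gamma_j$, and integrating $V$ against $\varrho$, one obtains
\[
\E\bigl[\,|\Gamma|^2 \mid \{\Gamma_j\}_j\,\bigr] \ \leq \ \frac{\pi \, \|\varrho\|_{\infty}}{\Im w}\,.
\]
The remaining inverse moment $\E[(\Im w)^{-1}]$ is then estimated via $\Im w \geq \sum_j \Im \Gamma_j$ combined with AM--GM and the independence of the children, which reduces matters to the finiteness of $\E[(\Im \Gamma)^{-1/K}]$. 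By Jensen's inequality this is bounded by a fractional power of $\E[(\Im \Gamma)^{-(3+\delta)}]$, which inherits finiteness from the hypothesis~\eqref{eq:invmom} on $\Im G(0,0;\zeta)$ through the closely related Schur-complement structure (the distributions differ only by the inclusion of one additional independent child).

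With $F(\zeta) := \E[|\Gamma(\zeta)|^2]$ uniformly bounded on $I + i(0,1]$, the finite-volume bounds~\eqref{eq:fv} are obtained by iterating the one-site conditional estimate along the path. Using the independence of the $\Gamma^{\perp}(x_k;\zeta)$ across $k$ to decouple consecutive steps of the Markov recursion, and extracting a factor $F(\zeta)$ per step (with $\log F(\zeta) = \varphi(2;\zeta)$ by the defining limit), the upper bound in~\eqref{eq:fv} follows; the matching lower bound stems from an analogous lower estimate on each one-site conditional expectation, which remains strictly positive thanks to $\Im \zeta > 0$. The constants $C_{\pm}$ depend only on $I$, $\|\varrho\|_{\infty}$, $K$, and the constant implicit in~\eqref{eq:invmom}. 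The main obstacle throughout is that the path factors $\Gamma(x_k;\zeta)$ are nested rather than jointly independent, so the second moment does not factorize cleanly; the role of the inverse-moment hypothesis is precisely to supply enough $V$-regularity, through the $L^\infty$-integration of $\varrho$, to close the iterated Markov estimate at the exact exponential rate $\varphi(2;\zeta)$ rather than at a larger exponent.
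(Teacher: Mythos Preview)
Your proposal has the right ingredients but a genuine gap at the decisive step. You correctly identify the factorization $G(0,x;\zeta) = G(0,0;\zeta)\prod_{k=1}^{n}\Gamma(x_k;\zeta)$ and correctly note that the path factors $\Gamma(x_k;\zeta)$ are \emph{not} independent but form a Markov chain. However, your conclusion ``extracting a factor $F(\zeta)$ per step (with $\log F(\zeta) = \varphi(2;\zeta)$ by the defining limit)'' is unjustified: precisely because of the dependence, $\varphi(2;\zeta)$ is \emph{not} equal to $\log \E[|\Gamma(\zeta)|^2]$ in general, and establishing $\E[|\Gamma|^2]<\infty$ does not by itself yield multiplicative control $\E[|G(0,x;\zeta)|^2]\asymp e^{\varphi(2;\zeta)|x|}$. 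What is needed is a sub/super\-multiplicative bound of the form $c_-\le \E[|G(0,x;\zeta)|^2]/(\E[|G(0,u;\zeta)|^2]\,\E[|G(u,x;\zeta)|^2])\le c_+$ for an arbitrary intermediate vertex $u$; this is what the paper actually proves. The paper does so by passing to the \emph{tilted} expectation $\E_2[\cdot]$, weighted by the two half-path second moments, and showing $c_-^{-1}\le \E_2[|G(u,u;\zeta)|^2]\le c_+$. The upper bound follows from the one-site integration you describe; the lower bound, however, requires an additional inductive lemma (Lemma~\ref{lem:FMtilt}) controlling tilted fractional moments of the root factor \emph{uniformly in the path length}, and this is where the full strength of the hypothesis $\E[(\Im G)^{-3-\delta}]<\infty$ is actually used (via a convexity/interpolation bound $M(2+s)/M(2)\le [M(4+\delta)^s M(-s)^2]^{1/(2-s+\delta)}$). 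Your sketch only extracts $\E[(\Im G)^{-1}]<\infty$, which suffices for the upper half of~\eqref{eq:surface} but not for the lower half.

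Your lower-bound remark (``strictly positive thanks to $\Im\zeta>0$'') is also insufficient: that gives constants depending on $\eta$, whereas~\eqref{eq:fv} requires $C_\pm$ uniform for $\eta\in(0,1]$. The paper's tilted-measure argument and Lemma~\ref{lem:FMtilt} are exactly what supply this uniformity.
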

Of main interest for us  is the \emph{upper bound} in~\eqref{eq:fv}, which together with~\eqref{eq:apriori} yields~\eqref{eq:mainlem}.  \\ 

For the proof of Theorem~\ref{lem:fv}, which proceeds essentially along the lines of Theorem~3.2 in Ref.~\onlinecite{AW11}, 
we recall some special properties of the Green function in a tree geometry: 
\begin{enumerate}
	\item For any $ x \in \mathcal{T} \backslash \{ 0 \} $ we denote by $\mathcal{P}_{0,x} $ 
the unique path  connecting $ 0 $ and $ x $.  	
For  each vertex $u\in \mathcal{P}_{0,x}$  other than the path's endpoints $0$ and $x$,   the vertex set $\mathcal{T}_u := \mathcal{T} \backslash \{ u \} $ decomposes into $ K+1 $ trees rooted at the neighbors $ v \in \mathcal{N}_u $ of $ u $.  We denote by $ G^{\mathcal{T}_{u}} $ the Green function of the natural restriction of $H$ to $ \ell^2(\mathcal{T}_u) $. 		
	The diagonal element of the Green function at $ u $ can then be written as
	\begin{equation} \label{eq:Grep}
	G(u,u; \zeta) \ = \ \Big(  V(u)  - \zeta -  \sum_{v\in {\mathcal N}_u} G^{\mathcal{T}_{u}}(v,v;\zeta) \Big)^{-1} \, ,
	\end{equation}
	and the Green function between $0$ and $x$  admits the factorization:
	\begin{equation}\label{eq:factG}
	G(0,x;\zeta) = G^{\mathcal{T}_{u}}(0,u_-;\zeta) \, G(u,u;\zeta) \, G^{\mathcal{T}_{u}}(u_+,x;\zeta) \, ,
\end{equation}
	where  $ u_\pm $ is the forward/backward neighbor of $ u $ on the path $ \mathcal{P}_{0,x} $.  
	\item Conditioning on the sigma algebra $\mathscr{A}_{u} $ generated by the random variables $ \{ V(y) \, | \, y \in  \mathcal{T}_u \} $, we thus obtain: 
	\begin{align}\label{eq:condE}
		\mathbb{E}\left[ \left| G(u,u; \zeta) \right|^{2+\beta} \big| \mathscr{A}_{u}\right] \ & \leq \ \| \varrho \|_\infty \, \mathbb{E}\Big[ \int \Big| v - \zeta -   \sum_{v\in {\mathcal N}_u} G^{\mathcal{T}_{u}}(v,v;\zeta) \Big|^{-2-\beta} dv \Big] \notag \\ 
			& \leq \  C_\beta \, \| \varrho \|_\infty \; \mathbb{E}\left[ \left( \Im G^{\mathcal{T}_{u}}(v,v;\zeta) \right)^{-1-\beta} \right] 
	\end{align}
	where $ \beta \geq 0 $ and $ v \in {\mathcal N}_u \backslash \{ u_- \} $. 
	Note that in this case $ G^{\mathcal{T}_{u}}(v,v;\zeta)  $ has the same distribution as $ G(0,0;\zeta) $.
\end{enumerate}
\begin{proof}[Proof of Theorem~\ref{lem:fv}]
As explained in Ref.~\onlinecite{AW11}, the claim follows from sub-/supermultiplicative bounds of the form:
\begin{equation} \label{eq:surface}
c_-^{-1} \ \le \   \mathbb{E}_2\left[| G(u,u; \zeta) |^2 \right] 
   \ \le \  c_+ \, ,
\end{equation}
for all $ \Re \zeta \in I $ and $ \Im \zeta \in (0,1] $, where we use the abbreviation
\begin{equation}\label{eq:defE2}
	   \mathbb{E}_2\left[\cdot\right] \ := \  \frac{ \mathbb{E}\left[ |G^{\mathcal{T}_{u}}(0,u_-; \zeta)|^2 \,  |G^{\mathcal{T}_{u}}(u_+,x_-; \zeta)|^2 \,   (\cdot)\,\right] } {   \mathbb{E}\left[|G^{\mathcal{T}_{u}}(0,u_-; \zeta)|^2\right] \,  \mathbb{E}\left[ |G^{\mathcal{T}_{u}}(u_+,x_-; \zeta)|^2   \right]} 
\end{equation}
for a  weighted, or \emph{tilted} - in the language of large deviations theory, expectation. [Note that the latter depends on various parameters such as $ \zeta $ and the involved vertices which are suppressed in the notation.] 

The claimed bounds~\eqref{eq:surface} rely on the factorization property~\eqref{eq:factG}. 
Namely, by virtue of~\eqref{eq:Grep} and \eqref{eq:condE} (with $ \beta = 0 $), the upper bound holds  with $ c_+ =  \pi \, \| \varrho \|_\infty \, \esssup_{\zeta \in I +i  (0,1]} \mathbb{E}\left[ \left( \Im G(0,0;\zeta) \right)^{-1} \right]   $.

The proof of the lower bound rests on~\eqref{eq:factG}  and~\eqref{eq:Grep} which yields for any $ t > 0 $:
\begin{equation}
 \mathbb{E}_2\left[\left| G(u,u; \zeta) \right|^2 \right]  \ \geq \ \int \frac{ \varrho(v) \, dv}{ \left[ | v| + |\zeta| + (K+1) t \right]^{2} } \, \prod_{w \in \mathcal{N}_u}   \mathbb{P}_2\left(  | G^{\mathcal{T}_{u}}(w,w; \zeta) | \leq t  \right) \, ,
\end{equation}
where we have used the independence (also under the titled measure) of the Green functions $ G^{\mathcal{T}_{u}}(w,w;\zeta) $ for different $ w \in \mathcal{N}_u $.  
The probability in the right side is estimated with the help of a Chebychev bound
\begin{equation}\label{eq:cheb}
 \mathbb{P}_2\left(  | G^{\mathcal{T}_{u}}(w,w; \zeta) | \leq t  \right) \ \ \geq \ 1-  \frac{ \mathbb{E}_2\left[  | G^{\mathcal{T}_{u}}(w,w; \zeta) |^s\right]}{t^{s}}  \, .
 \end{equation}
For $ w \in \mathcal{N}_u \backslash \{ u_+ \, , \, u_- \} $, the tilded measure coincides with the original one and one has the uniform estimate for $ s \in (0,1) $:
\begin{equation}
 \mathbb{E}\left[  | G^{\mathcal{T}_{u}}(w,w; \zeta) |^s\right] \ \leq \  \|\varrho \|_\infty \, \sup_{\zeta \in \mathbb{C}^+ } \int  \frac{dv}{ |v-\zeta|^{s}}  \quad \left[ =: \ D_s \right] \, . 
\end{equation}
For $ w \in \{ u_+ \, , \, u_- \} $ we need a separate argument to prove that the quantity $ \esssup_{\zeta \in I +i  (0,1]}  \mathbb{E}_2\left[  | G^{\mathcal{T}_{u}}(w,w; \zeta) |^s\right]  $ is  also bounded below 
uniformly in $ x $ and $ u $.  That is spelled below in Lemma~\ref{lem:FMtilt}.  We thus conclude that there exists $ t > 0$ such that for all  $ x $ and $ u $ the right side in~\eqref{eq:cheb} is larger that $ 1/2 $.  The claimed lower bound then readily follows.  
\end{proof}

In the above proof we encountered tilted expectation values of random variables $ G^{\mathcal{T}'}(x_0,x_0;\zeta) $, where $  \mathcal{T}' $ is either isomorphic to $ \mathcal{T} $ or of the form $ \mathcal{T}_v $ and $ x_0 $ is the root in $  \mathcal{T}' $.   The tilting is along a finite segment of an infinite path of vertices which in each case we list as $  x_0, \dots , x_n, \dots  $ in $  \mathcal{T}' $: 
\begin{equation}
\mathbb{E}_{2}^{(0,n)}\left[\cdot \right] := \frac{\mathbb{E}\left[|G^{\mathcal{T}'}(x_0,x_n;\zeta)|^2 \left( \cdot \right) \right] }{\mathbb{E}\left[|G^{\mathcal{T}'}(x_0,x_n;\zeta)|^2  \right]} \, . 
\end{equation}
The auxiliary result quoted there is that these quantities are  bounded independently of $ n $.  Following is the derivation of that statement (when applying it to~\eqref{eq:cheb} $x_0$ is to be adjusted to the relevant point $u_\pm $.) 
\begin{lemma}\label{lem:FMtilt}
Assuming~\eqref{eq:invmom}, and the  assumptions on $H$ which are spelled in Theorem~\ref{thm:main0}, 
for any $ s \in(0,1)$ there is  $ C_s(I) < \infty$ such that the following bound holds uniformly   for all $ n \in \mathbb{N}_0 $:
\begin{equation} \label{eq:II3}
R_n\ := \  \esssup_{	\zeta \in I +i  (0,1]} \mathbb{E}_{2}^{(0,n)}\left[\left|G^{\mathcal{T}'}(x_0,x_0;\zeta) \right|^s \right] \ \leq \  C_s(I) \, .  
\end{equation}
\end{lemma}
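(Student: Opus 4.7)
The plan is to reduce the tilted fractional moment $R_n$ to an inverse moment of $\Im G$ at a non-path neighbor of $x_0$, which is independent of the forward subtree carrying $Y$ and hence controlled by the standing hypothesis~\eqref{eq:invmom}; the matching uniform lower bound on the denominator is then supplied by the local regularity of $\varrho$ encoded in~\eqref{eq:max}.

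Concretely, I would first invoke the tree factorization (a one-step version of~\eqref{eq:factG}, obtained from the Schur complement at $x_0$):
\[
  G^{\mathcal{T}'}(x_0, x_n; \zeta) \ = \ G^{\mathcal{T}'}(x_0, x_0; \zeta) \cdot G^{\mathcal{T}'_{x_0}}(x_1, x_n; \zeta),
\]
where $\mathcal{T}'_{x_0} := \mathcal{T}' \setminus \{x_0\}$. With the shorthand $X := |G^{\mathcal{T}'}(x_0, x_0; \zeta)|$ and $Y := |G^{\mathcal{T}'_{x_0}}(x_1, x_n; \zeta)|$, the lemma reads $R_n = \mathbb{E}[X^{2+s} Y^2]/\mathbb{E}[X^2 Y^2] \leq C_s(I)$ uniformly in $n$.

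By the recursion~\eqref{eq:Grep}, $X = |V(x_0) - \omega_0|^{-1}$ with $\omega_0 := \zeta + \sum_{v \sim x_0} G^{\mathcal{T}'_{x_0}}(v, v; \zeta)$, so $V(x_0) \in \mathbb{R}$ gives $X \leq (\Im \omega_0)^{-1}$ and consequently $X^{2+s} \leq X^2 (\Im \omega_0)^{-s}$. Since $K \geq 2$, fix a neighbor $v_0 \neq x_1$ of $x_0$; then $\Im \omega_0 \geq \Im G^{\mathcal{T}'_{x_0}}(v_0, v_0; \zeta)$ and $X^2 \leq (\Im G^{\mathcal{T}'_{x_0}}(v_0, v_0))^{-2}$. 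The random variable $G^{\mathcal{T}'_{x_0}}(v_0, v_0; \zeta)$ is measurable with respect to the potentials in the subtree rooted at $v_0$ --- disjoint from the subtree on which $Y$ depends --- so it is independent of $Y$, giving
\[
   \mathbb{E}[X^{2+s} Y^2] \ \leq \ \mathbb{E}\bigl[(\Im G^{\mathcal{T}'_{x_0}}(v_0, v_0; \zeta))^{-(2+s)}\bigr] \cdot \mathbb{E}[Y^2].
\]
For $s \in (0, 1)$ and $\delta > 0$, $2 + s < 3 + \delta$, so the first factor is finite uniformly in $\zeta \in I + i(0,1]$ by~\eqref{eq:invmom} and the monotonicity $\mathbb{E}[Z^{-p}] \leq \mathbb{E}[Z^{-q}] + 1$ valid for $0 < p \leq q$ and positive $Z$.

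To close the argument I need the matching lower bound $\mathbb{E}[X^2 Y^2] \geq c(I) \, \mathbb{E}[Y^2]$ uniformly in $n$. Integrating out $V(x_0)$ yields $\mathbb{E}[X^2 \mid \omega_0] = \pi (P_{\Im \omega_0} * \varrho)(\Re \omega_0)/\Im \omega_0$, and~\eqref{eq:max} gives $(P_b * \varrho)(a) \geq \varrho(a)/(\pi c)$ for $0 < b \leq 1$, hence $\mathbb{E}[X^2 \mid \omega_0] \geq \varrho(\Re \omega_0)/(c \Im \omega_0)$. Averaging this lower bound over the contributions of the off-path neighbors of $x_0$ --- which are independent of the sigma-algebra generated by $Y$ and $\eta_{x_1} := G^{\mathcal{T}'_{x_0}}(x_1, x_1; \zeta)$ --- and once more invoking~\eqref{eq:invmom} delivers the desired uniform constant. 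This last step is the main obstacle: $\mathbb{E}[X^2 \mid \eta_{x_1}]$ is not uniformly bounded below as a function of $\eta_{x_1}$ alone (it degenerates for large $|\eta_{x_1}|$), so the bound must be extracted from the independent randomness at the off-path neighbors of $x_0$ together with the local non-degeneracy built into~\eqref{eq:max}.
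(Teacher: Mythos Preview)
Your upper bound on the numerator is fine: the factorization $G^{\mathcal{T}'}(x_0,x_n)=G^{\mathcal{T}'}(x_0,x_0)\,G^{\mathcal{T}'_{x_0}}(x_1,x_n)$, the pointwise estimate $X\le (\Im\omega_0)^{-1}\le (\Im G^{\mathcal{T}'_{x_0}}(v_0,v_0))^{-1}$ for an off-path neighbor $v_0$, and the independence of that subtree from $Y$ combine exactly as you say, and~\eqref{eq:invmom} controls the resulting inverse moment.

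The gap is in the denominator. What you need is $\mathbb{E}[X^2Y^2]\ge c\,\mathbb{E}[Y^2]$ uniformly in $n$, i.e.\ a uniform lower bound on $\mathbb{E}[X^2\mid\mathscr{F}_1]$, where $\mathscr{F}_1$ is the $\sigma$-algebra of the subtree past $x_1$. But $\eta_{x_1}:=G^{\mathcal{T}'_{x_0}}(x_1,x_1;\zeta)$ is $\mathscr{F}_1$-measurable, and as $|\eta_{x_1}|\to\infty$ the conditional expectation $\mathbb{E}[X^2\mid\eta_{x_1}]$ tends to zero. Your Poisson estimate gives $\mathbb{E}[X^2\mid\omega_0]\ge \varrho(\Re\omega_0)/(c\,\Im\omega_0)$, but averaging over the off-path $\Gamma_v$'s does not rescue this: when $|\Re\eta_{x_1}|$ is large, $\varrho(\Re\omega_0)$ is small (or zero for compactly supported $\varrho$) except on an event of vanishing probability, since the distribution of $\sum_{v\ne x_1}\Gamma_v$ is fixed and has finite fractional moments. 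Likewise $(\Im\omega_0)^{-1}\le(\Im\eta_{x_1})^{-1}$ goes the wrong way. So the inequality $\mathbb{E}[X^2Y^2]\ge c\,\mathbb{E}[Y^2]$ simply fails in general; you yourself flag this as ``the main obstacle,'' and the sketch does not resolve it.

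The paper's proof avoids ever needing a lower bound on $\mathbb{E}[X^2Y^2]$ in terms of $\mathbb{E}[Y^2]$. Instead it writes $R_n=M(2+s)/M(2)$ for the moment function $M(t)=\mathbb{E}_2^{(1,n)}[X^t]$ and uses log-convexity of $t\mapsto M(t)$ to interpolate:
\[
\frac{M(2+s)}{M(2)}\ \le\ \bigl[M(4+\delta)^{\,s}\,M(-s)^{\,2}\bigr]^{1/(2-s+\delta)}\,.
\]
The factor $M(4+\delta)$ is bounded via~\eqref{eq:invmom} exactly as you handled the numerator. The factor $M(-s)=\mathbb{E}_2^{(1,n)}[X^{-s}]$ expands by~\eqref{eq:Grep} into a sum whose only $n$-dependent term is $\mathbb{E}_2^{(1,n)}[|G^{\mathcal{T}'_{x_0}}(x_1,x_1)|^s]=R_{n-1}$. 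This yields a recursion $R_n\le A_s(B_s+R_{n-1})^{2/(2+\delta)}$, which is contracting because $2/(2+\delta)<1$ and hence has a finite fixed point dominating all $R_n$. The inductive structure is what replaces the unavailable uniform lower bound on $\mathbb{E}[X^2\mid\mathscr{F}_1]$.
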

\begin{proof}
	 In the argument use will repeatedly be made of the following  general interpolation bound.   For any random variable $Y$ the moment function $M(s) := \E(|Y|^s)$ (for which $M(0)=1$) satisfies, for all $s, \alpha \ge 0$:
\be \label{conbound} 
\frac{M(2+s)}{M(2)} \ \le \ \left[   M(2+s+\alpha)^s\, M(-s)^2 
\right]^{\frac{1}{s + \alpha}  } \, . 
\ee
This relation is implied by the well known convexity of the function $s\mapsto \log M(s)$:
\be \label{covex1} 
\frac{\log M(2+s) - \log M(2)}{s}  \  \le \   \frac{\log M(2+s + \alpha) - \log M(2)}{s + \alpha} \    
\le \   \frac{\log M(2+s + \alpha)}{s + \alpha} 
+ \frac{2 \log M(-s)}{s (s + \alpha)}
\ee 
where the first inequality is by the monotonicity relation of the slopes of the function's cords, and the second employs the bound   
$
\log M(2) \ \ge \  - 2 \log M(-s) / s $,  
which expresses the convexity relation among the values of: 
$\log M(-s),  \log M(0) = 0$ and $\log M(2)$. 
 The  claimed bound \eqref{conbound} is obtained by the exponentiation of~\eqref{covex1}.   The proof of the bound \eqref{eq:II3} 
will proceed now by induction on $ n \in \mathbb{N}_0$.

For  $ n = 0 $ let us note that the relevant tilted expectation value takes the form of the ratio of two moments: 
\be  \label{Mrep}
\mathbb{E}_{2}^{(0,0)}\left[\left|G^{\mathcal{T}'}(x_0,x_0;\zeta) \right|^s \right]  = \frac{M(2+s)}{M(2)} \, , 
\ee 
as in  \eqref{conbound}, of the quantity
	\begin{equation}\label{eq:reprep}
	 Y := G^{\mathcal{T}'}(x_0,x_0;\zeta) \ = \  \Big(  V(x_0)  - \zeta -  \sum_{v\in {\mathcal N}_{x_0}} G^{\mathcal{T}_{x_0}'}(v,v;\zeta) \Big)^{-1}
	 \end{equation}
where the last equality is by~\eqref{eq:Grep}.   
	 For the desired  bound, we  apply \eqref{conbound} with 
 $ \alpha = 2 - s + \delta $.   Since in \eqref{eq:reprep} there is at least is one neighbor $ v \in \mathcal{N}_{x_0} $ such that $ G^{\mathcal{T}_{x_0}'}(v,v;\zeta)  $ is identically distributed with $ G(0,0;\zeta) $, a calculation similar to  \eqref{eq:condE} (with $ \beta = 2+\delta $) yields
	\begin{equation}
	M(2+s + \alpha) \ = \ \mathbb{E} \left[ \mathbb{E}\left(  \left|Y\right|^{4+\delta} | \mathcal{A}_{x_0} \right) \right]  \ \leq \ C_{2+\delta} \, \| \varrho \|_\infty \, \mathbb{E}\left[ \left( \Im G(0,0;\zeta) \right)^{-3-\delta} \right]  
	\end{equation}
	which  by the explicitly assumed condition~\eqref{eq:invmom}  is uniformly bounded for almost all $ \zeta \in I +i(0,1] $. 
	
	For the fractional moment, using~\eqref{eq:reprep} and the triangle inequality one gets
	\begin{equation}
	M(-s) \ = \ 	 \mathbb{E}\left[ \left|Y \right|^{-s} \right] \ \leq \ \int \varrho(v) |v|^s dv + |\zeta|^s + K \, D_s \, .  
	\end{equation}
Applying now \eqref{conbound} to the expression in \eqref{Mrep} we conclude that $R_0<\infty$. 
	
	For the induction step ($ n-1 \to n  $), equation \eqref{Mrep} is replaced by:  
	\begin{equation}
	\mathbb{E}_{2}^{(0,n)}\left[\left|G^{\mathcal{T}'}(x_0,x_0;\zeta) \right|^s \right]  = \frac{\mathbb{E}_{2}^{(1,n)}\left[Y^{2+s} \right]}{ \mathbb{E}_{2}^{(1,n)}\left[Y^{2} \right]} \, . 
	\end{equation}
The interpolation relation \eqref{conbound} with $ \alpha = 2 - s + \delta $ bounds the right side by two terms. The first  is estimated 
similarly to~\eqref{eq:condE}:
	\begin{equation}
		\mathbb{E}_{2}^{(1,n)}\left[Y^{4 + \delta} | \mathcal{A}_{x_0} \right] \ \leq \ \| \varrho\|_\infty  \int \Big| v - \zeta - \sum_{v \in \mathcal{N}_{x_0}} G^{\mathcal{T}_{x_0}}(v,v;\zeta)   \Big|^{-4- \delta}\,  dv  \ \leq \ C_{2+\delta} \, \| \varrho \|_\infty \, \mathbb{E}\left[ \left( \Im G(0,0;\zeta) \right)^{-3-\delta} \right]   \, .
	\end{equation}
	Here the last inequality relies again on the fact that at least is one neighbor  of $ x_0 $ is identically distributed with $ G(0,0;\zeta) $. The second term
	is 
	\begin{align}
		\mathbb{E}_{2}^{(1,n)}\left[Y^{-s} \right] \ &  \leq \  \int \varrho(v) |v|^s dv + |\zeta|^s  + \sum_{v \in \mathcal{N}_{x_0}} \mathbb{E}_{2}^{(1,n)}\left[ |G^{\mathcal{T}_{x_0}}(v,v;\zeta)|^s \right] \\
		& \leq \ \int \varrho(v) |v|^s dv + |\zeta|^s + (K-1) D_s  + \mathbb{E}_{2}^{(1,n)}\left[\left|G^{\mathcal{T}_{x_0}}(x_1,x_1;\zeta) \right|^s \right] \, . 
	\end{align}
	In summary, we have thus established that there are constants $ A_s , B_s $, with which:
	\begin{equation} \label{esssup}
		R_{n} 
		\ \leq \ A_s \left( B_s + R_{n-1} \right)^{\frac{2}{2+\delta}} \, . 
	\end{equation}
It then follows inductively  that 
for all $n\ge 0$:  
\be  R_n \ \le \  C_s(I) \ := \  \min \{ u \ge R_0:  \,  A_s \left( B_s +u \right)^{\frac{2}{2+\delta}}< u\} \, ,  
\ee 
 which is finite since $ R_0 <\infty$ and $\frac{2}{2+\delta}< 1$.  
	This concludes the proof.
\end{proof}

\subsection{Finiteness of the Green function's inverse moments}

In order to apply Lemma~\ref{lem:fv} to  the proof of our main result one needs to establish the finiteness of the Green function's inverse moments  for energies within the $ac$ spectrum,  as expressed in~\eqref{eq:invmom}.   The starting point for that is the relation: 
	\begin{equation}\label{eq:recrel}
		\Im G(0,0;\zeta) \ \geq \ | G(0,0;\zeta) |^2 \, \sum_{v \in \mathcal{N}_0} \Im G^{\mathcal{T}_0}(v,v;\zeta) \, ,
	\end{equation}
which follows from~\eqref{eq:Grep}.  Among its implications is the zero-one law which was noted and applied in Ref.~\onlinecite{AW11}: 
 for each energy $E\in S$ (the full measure subset of $\mathbb R$ over which $G(0,0;E+i0)$  is defined), 
 $\mathbb{P}\left( \Im G(0,0;E+i0) \neq 0\right) $ is either $0$ or $1$.    Equivalently, for any $E\in \sigma_{ac}(H)$, as defined by \eqref{def:ac}: 
  $ \mathbb{P}\left( \Im G(0,0;E+i0) \leq x \right)  \to 0 $, as $x\to 0$.   The following may be viewed as a  quantitative improvement (though not yet the best possible) on that statement.  
 
 \begin{theorem}\label{thm:Fprop}  Under the assumptions of Theorem~\ref{thm:main0}, 
for any bounded and measurable set  $ I\subset \sigma_{ac}(H) $ the function  
	\begin{equation} \label{eq:x}
		F(x) \ := \ \esssup_{\zeta \in I+i(0,1]} \mathbb{P}\left( \Im G(0,0;\zeta) \leq x \right)  
	\end{equation}
satisfies: 
\begin{equation}\label{eq:boundF}
	F( x ) \leq C x^{3+\epsilon}
\end{equation}
for all $ x \in [0,x_0] $, at some (finite) $ C , x_0, \epsilon > 0 $. 
\end{theorem}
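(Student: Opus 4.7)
The plan is to bootstrap from the qualitative decay $F(x) \to 0$ (implied by the ac spectrum hypothesis together with the zero-one law) to the quantitative polynomial bound by exploiting the self-consistent structure of the Green function at the root. Set $\Gamma_v := G^{\mathcal{T}_0}(v,v;\zeta)$ for $v \in \mathcal{N}_0$ and $W := \zeta + \sum_v \Gamma_v$, so that \eqref{eq:Grep} gives
\begin{equation*}
G(0,0;\zeta) = (V(0) - W)^{-1}, \qquad \Im G(0,0;\zeta) = |G(0,0;\zeta)|^2 \bigl(\eta + \sum_v \Im \Gamma_v\bigr) .
\end{equation*}
For the base case, definition \eqref{def:ac} and the zero-one law preceding the theorem give $\mathbb{P}(\Im G(0,0;E+i0) > 0) = 1$ for almost every $E \in I$, and a Fatou/continuity passage (using the Wegner bound \eqref{eq:Wegner}) upgrades this to the uniform statement $F(x) \to 0$ as $x \downarrow 0$.

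Next I would establish the self-consistent inequality. From \eqref{eq:recrel} the event $\{\Im G(0,0;\zeta) \leq x\}$ forces $|G|^2 \sum_v \Im \Gamma_v \leq x$, hence for any threshold $Y > 0$ it lies in
\begin{equation*}
\Big\{\sum_v \Im \Gamma_v \leq Y\Big\} \cup \Big\{|V(0) - W| \geq \sqrt{Y/x}\Big\}.
\end{equation*}
The first event has probability at most $F(Y)^K$ by independence and the inequality $\sum \geq \max$. For the second I would derive, by conditioning on $\{\Gamma_v\}$ and combining the $r$-th moment assumption A2(a) on $V$ with the local density lower bound A2(b), a tail estimate $\mathbb{P}(|V(0) - W| \geq R) \leq C R^{-r}$, leading to
\begin{equation*}
F(x) \leq F(Y)^K + C (Y/x)^{-r/2}.
\end{equation*}
Substituting $Y = x^\beta$ with $\beta > 0$ small and using $F(x^\beta) \to 0$ converts the qualitative base case into a polynomial rate of order $(1-\beta)r/2$, arbitrarily close to $r/2$.

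For the bootstrap proper, if $F(x) \leq A x^a$ on $[0, x_0]$ then the optimal choice $Y \asymp x^{r/(2aK + r)}$ in the self-consistent inequality yields $F(x) \leq A' x^{a'}$ with $a' = aKr/(2aK + r)$. This map has unique positive fixed point $a^* = r(K-1)/(2K)$; for $K \geq 2$ and $r > 12$ this satisfies $a^* > 3$, so iterating from the initial exponent $a_0 \approx r/2$ one reaches, in finitely many steps, any exponent in the interval $(a^*, r/2]$, and in particular achieves $F(x) \leq C x^{3+\epsilon}$ for some $\epsilon > 0$.

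The main obstacle is the tail estimate for $\mathbb{P}(|V(0) - W| \geq R)$: extracting the genuine $R^{-r}$-decay despite the heavy tails of $W$. Because $\Gamma$ has only fractional moments $\mathbb{E}|\Gamma|^s$ finite for $s < 1$, the sum $W$ is heavy-tailed, and a naive Chebyshev estimate yields only $R^{-s}$ with $s < 1$ — far too weak for the bootstrap to close at an exponent above $3$. The resolution must combine A2(a) with the local density bound A2(b) so as to confine the relevant tail event for $|W|$ to configurations in which $\Re W$ lies inside the support of $\varrho$, whereupon the $r$-th moment of $V$ genuinely controls $\mathbb{P}(|V - \Re W| \geq R)$ at rate $R^{-r}$. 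This is the technical heart of the argument, in the same spirit as the conditional fractional-moment reductions used in Lemmas~\ref{lem:fv} and \ref{lem:FMtilt}, and is what pins down the numerical threshold $r > 12$.
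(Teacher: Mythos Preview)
Your decomposition is essentially the paper's: with $y=\sqrt{x/Y}$ your split is exactly $F(x)\le F(xy^{-2})^K + H_\zeta(y)$ where $H_\zeta(y)=\mathbb{P}(|G(0,0;\zeta)|\le y)=\mathbb{P}(|V(0)-W|\ge y^{-1})$. The gap is in the treatment of $H_\zeta(y)$. You aim for $\mathbb{P}(|V(0)-W|\ge R)\le CR^{-r}$ and then concede this fails because $W$ has only sub-$1$ moments; but your proposed rescue through A2(b) is not a mechanism that produces an $R^{-r}$ tail. Condition A2(b) is a \emph{local upper} bound on the density (it forbids spikes); it gives no lower bound on $\varrho$ near $\Re W$ and no way to ``confine $\Re W$ to the support of $\varrho$''. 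In fact A2(b) plays no role in this part of the argument, and no estimate of the form $\mathbb{P}(|V-W|\ge R)\le CR^{-r}$ uniform in the law of $W$ is available. Without it your bootstrap map $a\mapsto aKr/(2aK+r)$ never gets started.

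What actually closes the loop is that the tail of $|\Gamma|$ is itself controlled \emph{through $F$}. Decompose $\{|V(0)-W|\ge y^{-1}\}$ into $\{|V(0)|\ge (4y)^{-1}\}$ (Chebyshev with the $r$-th moment gives $Cy^r$) and $\{\,|\Gamma_v|\ge (2Ky)^{-1}\text{ for some }v\,\}$. For the latter, note that $|\Gamma|>t^{-1}$ forces both $|V-\Re W'|\le t$ (probability $\le 2\|\varrho\|_\infty t$ after conditioning on the offspring) and $\sum\Im\Gamma'_w\le t$ (probability $\le F(t)^K$), hence
\[
\mathbb{P}\big(|\Gamma|>t^{-1}\big)\ \le\ 2\|\varrho\|_\infty\, t\, F(t)^K .
\]
This yields the \emph{three-term} recursion
\[
F(x)\ \le\ F(xy^{-2})^K\ +\ C\,y\,F(cy)^K\ +\ C\,y^{r},
\]
which is the correct replacement for your two-term inequality. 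With this in hand a first doubling argument ($y=x^{1/4}$) gives a small initial exponent, and then optimizing $y$ improves $\alpha\mapsto \alpha K(1+\alpha K)/(1+3\alpha K)$, which for $K\ge 3$ increases without a finite fixed point up to the ceiling $r/3>4$. For $K=2$ this map stalls at $\alpha=1/2$, and one must expand the recursion one level further (conditioning on $|\Gamma(x_{\pm1})|\gtrless y^{-1}$) to push past $3$; this second-step analysis is exactly where the threshold $r>12$ enters.
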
 

Before proving it, let us note that Theorem~\ref{thm:Fprop} ensures the validity of ~\eqref{eq:invmom}, since \eqref{eq:boundF} implies that for all $ \delta \in (0,\epsilon) $:
  \begin{equation}
	\esssup_{\zeta \in I +i(0,1]}  \mathbb{E}\left[(\Im G(0,0;\zeta))^{-3-\delta}\right] \ \leq \ \ (3+\delta) \int_0^\infty \frac{F(x)}{x^{4+\delta}} \, dx  \ < \ \infty . 
\end{equation}  
  We did not  push the limits here. In fact, in case $ {\rm supp} \,  \varrho $ is compact, the subsequent proof shows that $ F(x) \leq C x^\gamma $ for any $ \gamma > 0 $.  

The proof utilizes the observation that by \eqref{eq:recrel}  the small probability event $ \{  \Im G(0,0;E+i0) \leq x \} $ requires the occurrence of  $K$ similar and somewhat uncorrelated events $\{ | G(0,0;\zeta) |^2 \,  \Im G^{\mathcal{T}_0}(v,v;\zeta)   \leq x \}$.  Had such a relation been valid with  $ | G(0,0;\zeta) |^2 $ replaced by a positive constant, it would easily follow that the probability $F(x)$ vanishes faster than any power of $x$.   However, as it is,  $ | G(0,0;\zeta) |^2 $ can be arbitrarily small.   Taking that into account, we get the following relation.
 
\begin{lemma}\label{lem:Fprop}  The function $ F(x)$ defined in  Theorem~\ref{lem:Fprop}  is monontone increasing in $x$, satisfies $ \lim_{x\downarrow 0} F(x) = 0 $, and 
		\begin{equation}\label{eq:nonlin}
			F(x) \ \leq \ F(x y^{-2})^K + C  \left(y \, F(c y)^K + y^{r} \right) \, ,  
		\end{equation}

for all $ x  >0 $ and  $ y \in (0,y_0) $, with $r= 6$ and some constants $ c, C, y_0  \in (0,\infty) $. 
\end{lemma}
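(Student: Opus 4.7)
\emph{Plan.} Monotonicity is immediate from the inclusion of events. The vanishing $\lim_{x\downarrow 0} F(x) = 0$ combines the Herglotz positivity $\Im G(0,0;\zeta)>0$ for $\Im\zeta>0$ with the zero-one law recorded just above the theorem: since $I\subset \sigma_{ac}(H)$, $\Im G(0,0;E+i0)>0$ almost surely for a.e.\ $E\in I$, and continuity of $\Im G(0,0;\cdot)$ on the closed strip $I+i[0,1]$ upgrades pointwise smallness of $\mathbb{P}(\Im G\leq x)$ to uniformity in $\zeta\in I+i(0,1]$.

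The substance of the lemma is the recursion~\eqref{eq:nonlin}. I would start from~\eqref{eq:Grep}, which gives $\Im G(0,0;\zeta) = |G(0,0;\zeta)|^2\, S$ with $S := \Im\zeta + \sum_{v\in\mathcal{N}_0} \Im G^{\mathcal{T}_0}(v,v;\zeta)$. Since $|G|^2 S\leq x$ forces either $S\leq xy^{-2}$ or $|G|<y$, one has the set inclusion
\[
\{\Im G(0,0;\zeta)\leq x\} \ \subseteq\ \{S\leq xy^{-2}\}\ \cup\ \{|G(0,0;\zeta)|\leq y\}.
\]
The first event sits inside $\{\forall v\in\mathcal{N}_0:\ \Im G^{\mathcal{T}_0}(v,v;\zeta)\leq xy^{-2}\}$. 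The Green functions on the $K$ subtrees rooted at the $K$ neighbors of $0$ are independent with marginals identical in distribution to $\Im G(0,0;\zeta)$, so this event contributes at most $F(xy^{-2})^K$.

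For the remaining event $\{|G(0,0;\zeta)|\leq y\}$ set $z := \zeta + \sum_{v\in\mathcal{N}_0} G^{\mathcal{T}_0}(v,v;\zeta)$, so that $G(0,0;\zeta) = (V(0)-z)^{-1}$ and $z$ is $\mathscr{A}_0$-measurable; the event becomes $\{|V(0)-z|\geq 1/y\}$. I would split on the size of $|z|$. If $|z|\leq 1/(2y)$ then $|V(0)|\geq 1/(2y)$, and Markov's inequality with the moment bound A2(a) contributes $C y^r$ (the claimed $r=6$ is well within the hypothesis $r>12$). If $|z|>1/(2y)$, a union bound (absorbing $|\zeta|$ for $y$ small) reduces the task to bounding $\mathbb{P}(|G^{\mathcal{T}_0}(v,v;\zeta)|\geq c/y)$ for some $c>0$ and some neighbor $v$, which by identity in distribution equals $\mathbb{P}(|G(0,0;\zeta)|\geq c/y)$. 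For this last probability, $|V(0)-z|\leq y/c$ forces both $|V(0)-\Re z|\leq y/c$ and $\Im z\leq y/c$; conditioning on $\mathscr{A}_0$, the density bound $\varrho\leq \|\varrho\|_\infty$ gives $\mathbb{P}(|V(0)-\Re z|\leq y/c\mid\mathscr{A}_0)\leq 2\|\varrho\|_\infty\, y/c$, while $\{\Im z\leq y/c\}\subseteq \{\forall v:\ \Im G^{\mathcal{T}_0}(v,v;\zeta)\leq y/c\}$ has probability at most $F(y/c)^K$ by independence across the $K$ branches. Multiplying these two factors, this regime contributes $O(y\, F(cy)^K)$ with a possibly relabelled constant $c$, and summing the three pieces yields~\eqref{eq:nonlin}.

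The genuinely delicate step is the last one: simultaneously extracting the linear prefactor $y$ (from the $L^\infty$-bound on $\varrho$) and the factor $F(cy)^K$ (from independence across the $K$ branches of the subtree), while keeping all constants uniform in $\zeta\in I+i(0,1]$. The rest is routine book-keeping, using boundedness of $I$ to absorb $|\zeta|$ into constants for $y\in(0,y_0)$.
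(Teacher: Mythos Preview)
Your derivation of the nonlinear inequality~\eqref{eq:nonlin} is correct and follows essentially the paper's route: the paper makes the same decomposition $\{\Im G\leq x\}\subseteq\{S\leq xy^{-2}\}\cup\{|G|\leq y\}$ and separates the bound on $H_\zeta(y):=\mathbb{P}(|G(0,0;\zeta)|\leq y)$ into an auxiliary statement (Lemma~\ref{lem:auxH}), whereas you carry out the identical two-step estimate inline. Your extraction of the factor $y\,F(cy)^K$ from $\mathbb{P}(|G(0,0;\zeta)|\geq c/y)$, via the $L^\infty$-bound on $\varrho$ together with independence of the $K$ branches, is precisely the paper's Lemma~\ref{lem:auxH} part~2.

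There is, however, a genuine gap in your justification of $\lim_{x\downarrow 0}F(x)=0$. The assertion that $\Im G(0,0;\cdot)$ is continuous on the closed strip $I+i[0,1]$ is false: boundary values of a Herglotz function exist only Lebesgue-a.e.\ on the real axis, and the boundary extension $E\mapsto G(0,0;E+i0)$ is in general discontinuous (nor is $I$ assumed closed). More importantly, even almost-sure continuity of the random variable in $\zeta$ would not upgrade pointwise convergence $\mathbb{P}(\Im G\leq x)\to 0$ to uniformity over the non-compact strip $I+i(0,1]$, since no compactness is available to run a Dini-type argument as $\eta\downarrow 0$. The paper does not provide a self-contained proof here either, referring instead to Lemma~4.1 of Ref.~\onlinecite{AW11}; the actual input required is a statement about the \emph{distribution} of $\Im G(0,0;E+i\eta)$ being controlled uniformly in $\eta\in(0,1]$ for a.e.\ $E\in\sigma_{ac}(H)$, not pathwise continuity of the realization on the closed strip.
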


\begin{proof} The first statement is implied by  monotonicity and the above mentioned $0$-$1$ law (Lemma~4.1 of Ref.~\onlinecite{AW11}). 
To arrive at \eqref{eq:nonlin} we note that for any $y>0$ the event in \eqref{eq:x} occurs only if either 
$| G(0,0;\zeta) | \leq y$ or for all $v \in \mathcal{N}_0$: $\Im G^{\mathcal{T}_0}(v,v;\zeta) \le xy^{-2}$.   
	Introducing the distribution functions $ H_\zeta(y) \ := \ \mathbb{P}\left(| G(0,0;\zeta) | \leq y \right) $ and $ F_\zeta(x) := \mathbb{P}\left( \Im G(0,0;\zeta) \leq x \right) $,  we   conclude that for all $ y > 0 $:
\begin{equation}\label{eq:step1F}
	F_\zeta(x) \ \leq \ F_\zeta(x y^{-2} )^K +H_\zeta(y) \, .
\end{equation}
To `close' this relation we employ the bound on $H_\zeta(y) $ which is presented in the Lemma~\ref{lem:auxH}, below.  Using it, the relation \eqref{eq:step1F}  yields for $ F_\zeta$ (and hence also for $ F $)  the following non-linear inequality:
 \begin{equation}
	F_\zeta(x)  \ \leq \ F_\zeta(x \, y^{-2})^K \ + \  4 K^2 \, \| \varrho \|_\infty \, y \, F_\zeta(2 K y)^K \ + \ \mathbb{P}\left( |V(0)| \geq (4y)^{-1}\right) 
\end{equation}
for any $ y \in (0, (4|\zeta|)^{-1}) $. Applying to that the Chebychev bound $ \mathbb{P}\left( |V(0)| \geq y^{-1}\right) \leq y^r\,  \mathbb{E}\left[|V(0)|^r\right] $, and the finiteness of $ \sup_{\zeta \in I+i(0,1]} |\zeta|$, one may deduce \eqref{eq:nonlin}. \\ 
\end{proof} 

In the above proof we relied on the following auxiliary statement. 
\begin{lemma}\label{lem:auxH}  Under the assumptions of Theorem~\ref{thm:main0}, for all $ x \in (0,\infty) $ one has:       
\begin{enumerate}    
	\item $     
\displaystyle
  H_\zeta(x) \ \leq \  \mathbb{P}\left( |V(0)| \geq (4x)^{-1}\right) + K \, \mathbb{P}\left(  | G(0,0;\zeta)  |    \geq    (2 K x)^{-1} \right) $ provided $ |\zeta| \leq (4x)^{-1} $.
  \item $ 
\displaystyle
 1 -  H_\zeta(x^{-1}) \leq  2 \| \varrho \|_\infty \, x\, F_\zeta(x)^K  $.
  \end{enumerate}
\end{lemma}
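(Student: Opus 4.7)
Both parts would follow from the root-recursion \eqref{eq:Grep}, rewritten as $G(0,0;\zeta) = (V(0) - W)^{-1}$ with
\be
 W \ := \ \zeta + \sum_{v \in \mathcal{N}_0} G^{\mathcal{T}_{0}}(v,v;\zeta),
\ee
which is $\mathscr{A}_0$-measurable. The argument uses only three standard facts: $V(0)$ is real, independent of $\mathscr{A}_0$, with density bounded by $\|\varrho\|_\infty$; the $K$ neighbor Green functions $\{G^{\mathcal{T}_0}(v,v;\zeta)\}_{v\in \mathcal{N}_0}$ are mutually independent, each distributed as $G(0,0;\zeta)$; and $\Im \zeta \geq 0$ together with $\Im G^{\mathcal{T}_0}(v,v;\zeta) \geq 0$.

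For part (i), the event $\{|G(0,0;\zeta)| \leq x\}$ is $\{|V(0) - W| \geq x^{-1}\}$. I would observe that if simultaneously $|V(0)| < (4x)^{-1}$ and $|G^{\mathcal{T}_0}(v,v;\zeta)| < (2Kx)^{-1}$ for every $v\in \mathcal{N}_0$, then the triangle inequality combined with the hypothesis $|\zeta|\leq(4x)^{-1}$ and $|\mathcal{N}_0|=K$ yields $|V(0)-W| \leq (4x)^{-1}+(4x)^{-1}+K\cdot(2Kx)^{-1} = x^{-1}$, contradicting the event (up to a null set, which is harmless since the distributions are absolutely continuous). A union bound over the $K$ neighbors together with the stated identical distribution then delivers (i).

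For part (ii), the event $\{|G(0,0;\zeta)| > x^{-1}\}=\{|V(0)-W|<x\}$ requires in particular $|\Im W| < x$. Since every summand in $\Im W$ is nonnegative, this forces $\Im G^{\mathcal{T}_0}(v,v;\zeta) < x$ for \emph{every} $v$. Hence the event is contained in $\{|V(0)-W|<x\} \cap A$, where
\be
 A \ := \ \bigcap_{v\in \mathcal{N}_0}\{\Im G^{\mathcal{T}_0}(v,v;\zeta)<x\}
\ee
is $\mathscr{A}_0$-measurable with $\mathbb{P}(A) = F_\zeta(x)^K$ by independence of the branches. Conditioning on $\mathscr{A}_0$ and integrating $V(0)$ against its bounded density supplies the remaining factor $\mathbb{P}(|V(0)-W|<x\mid\mathscr{A}_0) \leq 2\|\varrho\|_\infty\, x$, which combined with $\mathbb{P}(A)$ yields the stated inequality.

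Neither half poses a serious obstacle. The qualitative insight deserving emphasis is the sign observation exploited in (ii): because $V(0)$ is real and all imaginary parts of the neighboring Green functions are nonnegative, smallness of $|V(0)-W|$ forces the \emph{individual} imaginary parts to be small, which is precisely what converts a single small-density factor $2\|\varrho\|_\infty x$ into the much stronger product $F_\zeta(x)^K$. The only bookkeeping point worth double-checking is that the root has $|\mathcal{N}_0|=K$ (not $K+1$), which fixes the constants in part (i).
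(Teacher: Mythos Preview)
Your proposal is correct and follows essentially the same route as the paper's proof: both parts rest on the recursion~\eqref{eq:Grep}, with part~(i) handled by the triangle inequality plus a union bound over the $K$ branches, and part~(ii) by the observation that smallness of $|V(0)-W|$ forces each branch's imaginary part to be small (by positivity), after which independence of the branches and the boundedness of the density of $V(0)$ give the product $2\|\varrho\|_\infty\,x\,F_\zeta(x)^K$. The only cosmetic difference is that the paper phrases part~(i) as a direct union bound on $\{|V(0)|+|\zeta|+\sum_v|G^{\mathcal T_0}(v,v;\zeta)|\ge x^{-1}\}$ rather than via the contrapositive, and in part~(ii) it separates real and imaginary parts explicitly; one minor slip in your write-up is the equality $\mathbb{P}(A)=F_\zeta(x)^K$ (strict versus non-strict inequality), but only $\le$ is needed and that holds.
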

\begin{proof}
	We use the representation~\eqref{eq:Grep} to conclude:
	\begin{align}
		H_\zeta(x)  \ & \leq \ \mathbb{P}\Big(|V(0) | + |\zeta|  + \sum_{v \in \mathcal{N}_0} | G^{\mathcal{T}_0}(v,v;\zeta) | \geq x^{-1}  \Big)  \leq \ \mathbb{P}\left( |V(0)| \geq (4x)^{-1}\right) + K \, \mathbb{P}\left(  | G(0,0;\zeta)  |    \geq    (2 K x)^{-1} \right) \, .
	\end{align}
	Here the second inequality relied on $ |\zeta| \leq (4x)^{-1} $ and the fact that $    G^{\mathcal{T}_0}(v,v;\zeta)   $ with $ v \in \mathcal{N}_0 $ is identically distributed as $ G(0,0;\zeta) $. 
	Employing~\eqref{eq:Grep} again we get:
	\begin{align}
		1 - H_\zeta(x^{-1}) & = \mathbb{P}\left(  | G(0,0;\zeta)  |    >    x^{-1} \right)  \notag \\
		 & \leq  \ \mathbb{P}\Big( \big|V(0) - \Re \zeta -  \sum_{v \in \mathcal{N}_0} \Re G^{\mathcal{T}_0}(v,v;\zeta)  \big| \leq x \; \mbox{and} \;  \sum_{v \in \mathcal{N}_0} \Im G^{\mathcal{T}_0}(v,v;\zeta) < x \Big) \notag \\
		& \leq    \    2 \, \| \varrho \|_\infty \, x \;  \mathbb{P}\Big( \sum_{v \in \mathcal{N}_0} \Im G^{\mathcal{T}_0}(v,v;\zeta) \leq x \Big)  \ \leq \   2\, \| \varrho \|_\infty  \,  x  \, F_\zeta(x)^K \, , 
	\end{align}
where  the last step is due to the independence of the variables $ G^{\mathcal{T}_0}(v,v;\zeta) $, $ v \in \mathcal{N}_0 $. 
\end{proof}

We  now turn to the main result of this section. 

\begin{proof}[Proof of Theorem~\ref{thm:Fprop}] The proof proceeds in three steps. We  first establish  that a power-law bound of the form~\eqref{eq:boundF} holds with at least a small power.      Next, the power is improved to a value greater than $1$.  That step is simpler for $K$ larger than $2$.  The remaining case $K=2$ requires an extra argument, which forms the third step in the proof.   

For an initial power bound, we pick  $ y = x^{1/4} $ in \eqref{eq:nonlin} to conclude that for all $ x \in (0,\infty) $
\begin{equation}\label{eq:boundcorse}
 F(x^2) \ \leq 3 \, \max\left\{ F(x)^K \, , \, C\,  \sqrt{x} \right\} \, . 
\end{equation}
The convergence $ \lim_{x\downarrow 0 } F(x) = 0 $ implies that there is some $ x_0 \in (0, \min\{ (9C)^{-4}, 1/2\} ] $ such that $F(x) \leq 1/4  $ for all $ x\in [0,x_0] $. 
Moreover, there  is some $ \alpha_0 \in( 0, 1/16] $ such that
\begin{equation}\label{eq:apriori1}
	 F(x_0) \ \leq \  \frac{x_0^{2\alpha_0} }{3} \, . 
\end{equation}
 We now define recursively $ x_n := x_{n-1}^2 $ for all $ n \in \mathbb{N} $. By induction on $ n $, one establishes that $ F(x_n) \leq \frac{1}{3} x_{n+1}^{\alpha_0}$ for all $ n \in \mathbb{N}_0 $. Namely, for $ n = 0 $ this is the content of~\eqref{eq:apriori1}. For the induction step, we use~\eqref{eq:boundcorse} which implies
\begin{equation}
	\frac{F(x_n)}{x_{n+1}^{\alpha_0}} \  \leq 3 \,  \max\left\{ \frac{F(x_{n-1})^K}{x_n^{2\alpha_0}} \, , \, C\,  x_{n}^{\frac{1}{4} - 2\alpha_0} \right\}   \ \leq \ 3 \, \max\left\{ \Big(\frac{F(x_{n-1})}{x_n^{\alpha_0}}\Big)^2 \, , \, C x_0^{\frac{1}{8}} \right\}  \leq \frac{1}{3} \, . 
\end{equation}
Since $ F $ is monotone increasing, this implies that for all $ x \in (x_{n+1} , x_n] $ and all $ n \in \mathbb{N} $ (and hence for all $ x \in (0,x_0] $):
\begin{equation}\label{eq:apriori2}
 F(x)  \ \leq \ F(x_n) \ \leq  \frac{x_{n+1}^{\alpha_0} }{3} \ \leq \  \frac{x^{\alpha_0} }{3}  \, . 
\end{equation}
This completes the proof of the initial (still insufficient) bound.

In the second step in the proof,  we will improve on the power law with which~\eqref{eq:apriori2} holds.   
To this end, suppose that for some $ \alpha ,  C > 0 $ and  all $ x\in (0,x_0] $:
\begin{equation}\label{eq:boundF2}
	F(x) \ \leq \ C \ x^\alpha 
\end{equation}
Then~\eqref{eq:nonlin} with $ y = x^{\frac{\alpha K}{1+3\alpha K} }$ implies that for all $ \alpha>0 $ and $ K \geq 2 $:
\begin{equation}
F(x) \ \leq  \ C \left( x^{\alpha K \frac{1+\alpha K}{1 + 3 \alpha K} } \, +\, x^{  \frac{r \alpha K}{1+3\alpha K} } \right)  \ \leq \ C \, \left( x^{\alpha  K \frac{1+\alpha K}{1 + 3 \alpha K} } +  \, x^{\frac{r}{3}} \right)\,  .
\end{equation}
with some constant $ C > 0 $. In case $ K \geq 3 $, this shows that one may improve the exponent $ \alpha $ in the bound~\eqref{eq:boundF2} by a factor larger than one as long as $ \alpha \leq  \frac{r}{3}$. This proves that the bound~\eqref{eq:boundF2} holds with $ \alpha = \frac{r}{3} $ if $ K \geq 3 $.

In case $ K = 2 $, we need to improve on the non-linear inequality~\eqref{eq:nonlin} in order to improve on the apriori bound.  To do so we will denote by $ x_{\pm 1} $ the two neighbors of the root and expand the recursion relation~\eqref{eq:recrel} one step further:
\begin{align}\label{eq:2step}
	\Im G(0,0;\zeta) \ &  \geq \ \, \sum_{\nu \in \{\pm 1\} } | G(0,x_\nu;\zeta) |^2 \sum_{v \in \mathcal{N}_{x_\nu} \backslash \{0\}}  \Im G^{\mathcal{T}_{x_\nu}}(v,v;\zeta) \notag \\
	& \geq \, \sum_{\nu \in \{\pm 1\} } \frac{| \Gamma(x_\nu)|^2}{( |V(0) | + |\zeta| +   | \Gamma(x_\nu)| + | \Gamma(x_{-\nu})|)^2}  \sum_{v \in \mathcal{N}_{x_\nu} \backslash \{0\}}  \Im G^{\mathcal{T}_{x_\nu}}(v,v;\zeta)  \, .
\end{align}
where  $  \Gamma(x_\nu) := G^{\mathcal{T}_0}(x_\nu,x_\nu;\zeta)  $. We pick $ y > 0 $ and condition on the event $ E_0 :=\left\{  |V(0) | \leq y^{-1} \right\} $.  We now distinguish three events, which decompose the probability space:
\begin{enumerate}
	\item $ E_1 := \left\{ |  \Gamma(x_\nu) | > y^{-1} \; \mbox{ for both $ \nu=\pm 1$} \right\} $.  
	\item $ E_2 := \left\{  | \Gamma(x_1)| > y^{-1} \; \mbox{and} \;  | \Gamma(x_{-1})| \leq y^{-1} \right\} $, or vice versa in case $ E_3 $.
	\item $ E_4 := \left\{ |  \Gamma(x_\nu) | \leq  y^{-1} \; \mbox{ for both $ \nu=\pm 1$} \right\} $.
\end{enumerate}
The probability of the event $ E_1 $ is estimated with the help of Lemma~\ref{lem:auxH}:
\begin{equation}
	\mathbb{P}\left(E_1\right)  \ = \ \mathbb{P}\left( |  \Gamma(x_{1}) |>  y^{-1}  \right)^2 \leq \ C y^2 \, F_\zeta(y)^4 \, . 
\end{equation}
In the event $ E_2 $ one of the quadratic factors in the first sum in~\eqref{eq:2step} is bounded from below,
\begin{equation}
	 \frac{| \Gamma(x_1)|}{( |V(0) | + |\zeta| +   | \Gamma(x_1)| + | \Gamma(x_{-1})|)} \geq  \frac{y^{-1}}{y^{-1} + |\zeta| + 2 y^{-1} } = \frac{1}{3 + y \, |\zeta| } \, . 
\end{equation}
Supposing that $ y $ is upper bounded, there is some constant $ C > 0 $ such that we may hence estimate 
\begin{equation}
	\mathbb{P}\left(\{ \Im G(0,0;\zeta) \leq x \} \cap E_0 \cap E_2  \right)  \  \leq \ \mathbb{P}\left(  \sum_{v \in \mathcal{N}_{x_1} \backslash \{0\}}  \Im G^{\mathcal{T}_{x_1}}(v,v;\zeta) \leq C x \right) \ \leq \ F_\zeta(C x)^2 \, . 
\end{equation}
In case $ E_0 \cap E_3 $ happens, one proceeds analogously. It therefore remains to estimate the last case in which we use~\eqref{eq:recrel} and the fact that $| G(0,0;\zeta) | \geq ( y^{-1} + |\zeta| + 2 y^{-1} ) $ in the event $ E_0 \cap E_4   $. There is hence some constant $ C > 0 $ such that
\begin{equation}
	\mathbb{P}\left( \{ \Im G(0,0;\zeta)\leq x \} \cap E_0 \cap E_4  \right)  \  \leq \ \mathbb{P}\left(  \sum_{v \in \mathcal{N}_{0} \backslash \{0\}}  \Im G^{\mathcal{T}_{0}}(v,v;\zeta) \leq  C x y^{-2}  \right) \ \leq \ F_\zeta( C x y^{-2}  )^2 \, . 
\end{equation}
Summarizing, we also have for all sufficiently small $ y > 0$: 
\begin{equation}
	F(x) \ \leq \ F( C x y^{-2}  )^2 + C \left( y^2 \, F(y)^4  + y^{r}   +  2 F(C x)^2 \right) \, . 
\end{equation}
We now proceed as in the case $ K \geq 3 $ and assume a bound of the form~\eqref{eq:boundF2}. Picking $ y = x^{\frac{\alpha}{1+4\alpha}} $ then yields
\begin{equation}
	F(x) \ \leq \ C \left( x^{\alpha \, \frac{2 + 4\alpha}{1+4\alpha}} + \, x^{\frac{r \alpha }{1+4\alpha} } +  x^{2\alpha}\right) \, . 
\end{equation}
Since $ r > 12 $ this yields an improvement of the exponent in~\eqref{eq:boundF2} provided $ \alpha \leq r/4 $. This completes the proof of the last assertion in Theorem~\ref{thm:Fprop} also for $ K = 2 $. 
\end{proof}

\medskip

\appendix

\section{On two notions of \emph{ac} spectrum} \label{App:ac}

As mentioned in the introduction, there is more than one natural choice for the notion  of the absolutely continuous spectrum of an ergodic self adjoint operator~$H$, and in particular of the spectral measure associated with the vector $\delta_0$.    In addition to $\sigma_{ac}(H)$ of \eqref{def:ac}, one may also consider  $\Sigma_{ac}(H)$ which is defined as the minimal closed set in which the $ac$ component of the spectral measure is supported (or equivalently: the collection of energies for which the $ac$ spectral measure of arbitrarily small neighborhoods is non zero).  
For better clarity in regard to the statements proven above, let us add  some general comments on the relation between the two sets 
$\sigma_{ac}(H)$ and $\Sigma_{ac}(H)$.

\begin{enumerate} 
\item It may well be that  random Schr\"odingier operators have only $ac$ and $pp$ spectra, with  $\sigma_{ac}(H)$ a finite union of open intervals and $\Sigma(H)$ being the disjoint union of $\Sigma_{pp}$ and $\sigma_{ac}$.   In that case, the only difference between $\sigma_{ac}(H)$ and $\Sigma_{ac}(H)$ is the  inclusion of the endpoints of the intervals.
However, the techniques available so far do not yet allow such a conclusion.  That includes  the criterion~\cite{AW11}  $\varphi(E;1)  < \log K$, which is expressed  in terms of a function whose continuity and other regularity properties are still not sufficiently understood.
\item  The measurable set $\sigma_{ac}(H)$ is dense within the closed set  $\Sigma_{ac}(H)$.  
\item  \label{it} Despite the above point it is not generally true, within the context of self adjoint operators, that the set difference between   $\Sigma_{ac}(H)$ and $ \sigma_{ac}(H)$ is  of measure zero.   (For a counterexample consider a self adjoint operator for which $\sigma_{ac}(H)$ is a union of a countable collection of intervals in $[0,1]$, the sum of whose measures is smaller than $1$, yet whose centers form a dense subset of  $[0,1]$.)  Hence, even though for every $E\in \sigma_{ac}(H)$:
\be  \label{def:ac2}
  \mathbb{P} \left( \Im G(0,0;E+i0) \neq 0 \right )  \ >\  0  \  \,.
\ee 
one \emph{cannot}  conclude that \eqref{def:ac2} holds also for almost every $E\in  \Sigma_{ac}(H)$. 
\item  Another relevant property which is not automatically shared by  $\Sigma_{ac}(H)$ is that within $\sigma_{ac}(H)$ the spectrum is (almost surely) \emph{purely} ac (the mean value of the singular measure of $\sigma_{ac}(H)$ is zero, though that is not known for $\Sigma_{ac}(H)$).   This follows by the arguments found in Refs.~\onlinecite{Aron,Sim_Wolff}, combined with the $0$-$1$ law which is valid for  tree graphs by which~\eqref{def:ac2} implies the stronger statement that the probability seen there is actually $1$. 
\end{enumerate} 
The point \ref{it}. is the main reason our results are formulated for $\sigma_{ac}(H)$ rather than $\Sigma_{ac}(H)$.   
\medskip 

\section{Proof of a (known) ballistic upper bound}\label{App:B}

For completeness of the presentation, following is a simple proof of a known ballistic upper bound which our main result complements.

\begin{proof}[Proof of \eqref{eq:ballistic1}]  
For an arbitrary graph $\G$, on which the distance is denoted by $d(x,y)$,
let $A$ be an operator on $\ell^2(\G)$ with kernel $A(x,y)$, 
for which
\be  
g(\alpha)  := \sup_{x\in \G} \sum_{y\in \G} |A(x,y)| \, e^{\alpha d(x,y)} \   <\  \infty 
\ee  for all 
$\alpha <  \infty$ (which implies that $ g $ is continuous).  For such operators the exponential function admits a convergent power series expansion.  Using it one finds, for each $x_0\in \G$: 
\be \label{exp1}
\sum_{x\in \G}\left|  \langle \delta_x \, , \, e^{-i t A } \, \delta_{x_0}\rangle \right| \, e^{\alpha d(x,y) } \  \le \   \sum_{n=0}^\infty \frac{t^n}{n!} 
\prod_{j=1}^n \sum_{x_j \in \G} \left[ |A(x_{j-1},x_j)| \, e^{\alpha d(x_{j-1}, x_j)} \right] 
  \  \le \  e^{t \, g(\alpha) }  \, ,   
\ee
where use was made of the triangle inequality: 
$d(x,x_0) \le \sum_{j=1}^n d(x_j,x_{j-1})$.

  Next, it may be noted that the above bound 
  extends unchanged to operators of the form $H= A+V$, with $A$ as above and $V$ an arbitrary real valued potential, that is a self adjoint operator with a diagonal  kernel.  
 Applying the  bound \eqref{exp1} within the Lie- Trotter formula 
  \be 
 \langle \delta_x \, , \, e^{-i t H } \, \delta_{x_0}\rangle  \  = \  \lim_{k\to \infty}\;   \langle \delta_x \, , \,  \left[ e^{-i A t /k } e^{-i V t/k}\right]^k \delta_{x_0}\rangle
\ee 
we get: 
\be \label{exp2}
\sum_{x\in \G} \left|  \langle \delta_x \, , \, e^{-i t H } \, \delta_{x_0}\rangle \right| \, e^{\alpha d(x,y) } \  \le \   
\lim_{k\to \infty}  \left[e^{ g(\alpha) t/k }\right]^k   \ = \ 
 \  e^{t \, g(\alpha) }  \, . 
\ee 
Let now $\hat v := \min_{\alpha >0} g(\alpha)/\alpha$ and denote by $\mu \in (0,\infty) $ the (largest) minimizing $\alpha$   (both are easily seen to be well defined).   Applying the Chebyshev inequality to the probability of the event and using the fact that $ \left|  \langle \delta_x \, , \, e^{i t H } \, \delta_{x_0}\rangle \right|  \leq 1 $ one gets, for any $v> \hat v$: 
\begin{eqnarray}  
{\rm Pr}_{\delta_0,t}(|x|> v t) \ &\le& \  e^{- \mu t   v} \sum_{x\in \G}   {\rm Pr}_{\delta_0,t}(x)   \, e^{\mu |x|}  \ \leq \  e^{- \mu t   v} \sum_{x\in \G}  
 \left|  \langle \delta_x \, , \, e^{-i t H } \, \delta_{0}\rangle \right|  \, e^{\mu |x|}  \nonumber 
  \\[2ex]   &\le&  \ e^{- \mu t v} e^{t \, g(\mu) }  \ =\   e^{ -  \mu t  [v - \hat v]  }  \, , \\  \nonumber 
\end{eqnarray} 
where $|x|:= d(x,0)$.  This directly implies the ballistic upper bound \eqref{eq:ballistic1} for the operator $H$ of \eqref{def:H}.    

\end{proof}

Let us note that a generalization of the upper bound \eqref{eq:ballistic1}  
(though not of the lower bound which is our main result)
 can be found in the Lieb-Robinson theorem on the finiteness of the speed of sound in  many body quantum systems~\cite{LR}, a general result which is of current interest.\cite{NS}  In fact, Eq.~\eqref{eq:ballistic1} can also be concluded from the latter through the second quantization formalism in the context of a system of non-interacting Fermions.       \\

\noindent\textit{Acknowledgments.}
We are grateful  for the NSF support of the research presented here, which was supported in part by grants PHY-1104596  (MA) and DMS-0701181 (SW).

\end{document}